\newtheorem{theorem}{Theorem}[section]
\newtheorem{lemma}[theorem]{Lemma}
\newtheorem{definition}[theorem]{Definition}
\theoremstyle{plain}
\newtheorem{claim}[theorem]{Claim}
\newcommand{\ProblemName}[1]{\textsf{#1}}
\newcommand{\MF}{\ProblemName{Max-Flow}\xspace}
\newcommand{\SSMF}{\ProblemName{Single-Source Max-Flow}\xspace}
\newcommand{\APMF}{\ProblemName{All-Pairs Max-Flow}\xspace}
\newcommand{\STMF}{\ProblemName{ST-Max-Flow}\xspace}
\newcommand{\GMF}{\ProblemName{Global Max-Flow}\xspace}
\newcommand{\MLEC}{\ProblemName{Maximum Local Edge Connectivity}\xspace}
\newcommand{\kPMF}{\ProblemName{kPMF}\xspace}
\newcommand\tO{\ensuremath{\tilde O}}
\newcommand{\Madry}{M{\k{a}}dry\xspace}
\providecommand{\card}[1]{\lvert#1\rvert}
\title{Conditional Lower Bounds for All-Pairs Max-Flow\footnote{This work was partially supported by the Israel Science Foundation grant \#897/13 and by a Minerva Foundation grant. An extended abstract of this article appears in Proceedings of ICALP 2017 and is also available at 
\href{https://arxiv.org/abs/1702.05805}{arXiv:1702.05805}. 
The most significant difference is the addition of Section~\ref{GTSC}.}}
\author{Robert Krauthgamer\footnote{Email: \texttt{robert.krauthgamer@weizmann.ac.il}  }
\qquad 
Ohad Trabelsi\footnote{Email: \texttt{ohad.trabelsi@weizmann.ac.il}}
\\
Weizmann Institute of Science
}
\begin{document}
\maketitle

\begin{abstract}
We provide evidence that computing the maximum flow value between 
every pair of nodes in a \emph{directed} graph on $n$ nodes, $m$ edges, 
and capacities in the range $[1..n]$, which we call the \APMF problem,
cannot be solved in time that is significantly faster
(i.e., by a polynomial factor) than $O(n^3)$ even for sparse graphs, namely $m=O(n)$; thus for general $m$, it cannot be solved significantly faster than $O(n^2m)$.
Since a single maximum $st$-flow can be solved 
in time $\tO(m\sqrt{n})$ [Lee and Sidford, FOCS 2014], 
we conclude that the all-pairs version might require time equivalent to $\tilde\Omega(n^{3/2})$ computations of maximum $st$-flow, 
which strongly separates the directed case from the undirected one.
Moreover, if maximum $st$-flow can be solved in time $\tO(m)$, 
then the runtime of $\tilde\Omega(n^2)$ computations is needed. This is in contrast to a conjecture of Lacki, Nussbaum, Sankowski, and Wulff-Nilsen [FOCS 2012] that \APMF in general graphs can be solved faster than the time of $O(n^2)$ computations of maximum $st$-flow.

Specifically, we show that in sparse graphs $G=(V,E,w)$, if one can compute the maximum $st$-flow from every $s$ in an input set of sources $S\subseteq V$ 
to every $t$ in an input set of sinks $T\subseteq V$ in time $O((\card{S}\card{T}m)^{1-\varepsilon})$, 
for some $\card{S}$, $\card{T}$ and a constant $\varepsilon>0$, 
then MAX-CNF-SAT (maximum satisfiability of conjunctive normal form formulas)
with $n'$ variables and $m'$ clauses can be solved in time ${m'}^{O(1)}2^{(1-\delta)n'}$ for a constant $\delta(\varepsilon)>0$, 
a problem for which not even $2^{n'}/\poly(n')$ algorithms are known. Such running time for MAX-CNF-SAT would in particular refute the Strong Exponential Time Hypothesis (SETH). Hence, we improve the lower bound of Abboud, Vassilevska-Williams, and Yu [STOC 2015], who showed that for every fixed $\varepsilon>0$ and $\card{S}=\card{T}=O(\sqrt{n})$, if the above problem can be solved in time $O(n^{3/2-\varepsilon})$, then some incomparable (and intuitively weaker) conjecture is false. Furthermore, a larger lower bound than ours implies strictly super-linear time
for maximum $st$-flow problem, which would be an amazing breakthrough.

In addition, we show that \APMF in \emph{uncapacitated} networks with every edge-density $m=m(n)$,
cannot be computed in time significantly faster than $O(mn)$, even for acyclic networks.
The gap to the fastest known algorithm by Cheung, Lau, and Leung [FOCS 2011]
is a factor of $O(m^{\omega-1}/n)$, and for acyclic networks it is $O(n^{\omega-1})$, 
where $\omega$ is the matrix multiplication exponent.

Finally, we extend our lower bounds to the version that asks only for 
the maximum-flow values below a given threshold
(over all source-sink pairs).
\end{abstract}


\section{Introduction}

The maximum flow problem is one of the most fundamental problems in combinatorial optimization. This classic problem and its variations such as minimum-cost flow, integral flow, and minimum-cost circulation, were studied extensively over the past decades, and have become key algorithmic tools
with numerous applications, in theory and in practice. 
Moreover, techniques developed for flow problems were generalized
or adapted to other problems, 
see for example \cite{bazaraa2011linear, AMJ93, arora2012multiplicative}.
The maximum $st$-flow problem, which we shall denote \MF, 
asks to ship the maximum amount of flow from a source node $s$ 
to a sink node $t$ in a directed edge-capacitated graph $G=(V,E,w)$,
where throughout, we denote $n=\card{V}$ and $m=\card{E}$, 
and assume integer capacities bounded by $U$.
After this problem was introduced in 1954 by Harris and Ross 
(see~\cite{schrijver2002history} for a historical account),
Ford and Fulkerson~\cite{FF56} devised the first algorithm for \MF, 
which runs in time $O((n+m)F)$, 
where $F$ is the maximum value of a feasible flow. 
Ever since, a long line of generalizations and improvements was studied, 
and the current fastest algorithm for \MF with arbitrary capacities is by Lee and Sidford~\cite{lee2014path}, 
which takes ${O}(m\sqrt{n}\log U)$ time. 
For the case of small capacities and sufficiently sparse graphs, the fastest algorithm, due to \Madry~\cite{madry2016computing}, 
has a running time  $\tO(m^{10/7}U^{1/7})$.
Here and throughout, $\tO(f)$ denotes $O(f\log^c f)$ for unspecified constant $c>0$.

A very natural problem is to compute the maximum $st$-flow for multiple 
source-sink pairs in the same graph $G$. 
The seminal work of Gomory and Hu~\cite{GH61} shows that in undirected graphs, \MF for all $\binom{n}{2}$ source-sink pairs requires at most $n-1$ 
executions of \MF (see also~\cite{Gusfield90}, where the $n-1$ computations are all on the input graph),
and a lot of research aimed to extend this result to directed graphs, 
with several partial successes, see details in Section~\ref{Prior}.
However, it is still not known how to solve \MF for multiple source-sink pairs
faster than solving it separately for each pair,
even in special cases like a single source and all possible sinks.
We shall consider the following problems involving multiple source-sink pairs, where the goal is always to report the value of each flow
(and not an actual flow attaining it).

\begin{definition}(\SSMF)
Given a directed edge-capacitated graph $G=(V,E,w)$ and a source node $s\in V$, output, for every $t\in V$, the maximum flow that can be shipped in $G$ from $s$ to $t$.
\end{definition}

\begin{definition}(\APMF)
Given a directed edge-capacitated graph $G=(V,E,w)$, output, for every pair of nodes $u,v\in V$, the maximum flow that can be shipped in $G$ from $u$ to $v$.
\end{definition}

\begin{definition}(\STMF)
Given a directed edge-capacitated graph $G=(V,E,w)$ and two subsets of nodes $S,T\subseteq V$, output, for every pair of nodes $s\in S$ and $t\in T$, the maximum flow that can be shipped in $G$ from $s$ to $t$.
\end{definition}

\begin{definition}(\GMF)
Given a directed edge capacitated graph $G=(V,E,w)$, output the maximum among all pairs $u,v\in V$, of the maximum flow value that can be shipped in $G$ from $u$ to $v$.
\end{definition}

\begin{definition}(\MLEC)
Given a directed graph $G=(V,E)$, output the maximum among all pairs $u,v\in V$, of the maximum number of edge-disjoint $uv$-paths in $G$.
\end{definition}

Note that in a graph with all edge capacities equal to $1$, the problem of finding the maximum local edge connectivity is equivalent to finding the global maximum flow.

\subsection{Prior Work}\label{Prior}

\begin{table}[t]
\centering
\begin{tabular}{c c c c c}
\hline\hline
Directed & Class & Problem & Runtime & Reference  \\ [0.5ex]
\hline
No & General & All-Pairs (G-H Tree) & $(n-1)T(n,m)$ & ~\cite{GH61}\\
No & Uncapacitated Networks & All-Pairs (G-H Tree) & $\tO(mn)$ & ~\cite{KM02},~\cite{BHKP07} \\
No & Genus bounded by $g$ & All-Pairs (G-H Tree) & $2^{O(g^2)}n\log^3n$ & ~\cite{borradaile2014all} \\
Yes & Sparse & All-Pairs & $O(n^2+\gamma^4\log \gamma)$ & ~\cite{ArikatiCZ95}\\
Yes & Constant Treewidth & All-Pairs & $O(n^2)$  & ~\cite{ArikatiCZ95} \\
Yes & Uncapacitated & All-Pairs & $O(m^{\omega})$  & ~\cite{cheung2013graph}\\
Yes & Uncapacitated DAG & Single-Source & $O(n^{\omega-1} m)$  & ~\cite{cheung2013graph}\\
Yes & Planar & Single-Source & $O(n\log^3 n)$  & ~\cite{lacki2012single}\\ [1ex]
\hline\hline
\end{tabular}
\caption{Known algorithms for multiple-pairs \MF. 
In this table, $T(n,m)$ is the fastest time to compute maximum $st$-flow 
in an undirected graph, $\omega$ is the matrix multiplication exponent, and $\gamma=\gamma(G)$ is a topological property of the input network that varies between $1$ and $\Theta(n)$. 
In planar graphs, $\gamma$ is the minimum number of faces required to cover all the nodes (i.e., every node is adjacent to at least one such face) 
over all possible planar embeddings~\cite{Frederickson95}.}
\vspace{.1in}\hrule
\label{table:nonlin}

\end{table}

We start with undirected graphs, where the \APMF values can be represented in a very succint manner, called nowdays \textit{a Gomory-Hu} tree~\cite{GH61}. In addition to being very succint, it allows the flow values and the corresponding cuts (vertex partitions) to be quickly retrieved. For a list of previous algorithms for multiple pairs maximum $st$-flow, see Table~\ref{table:nonlin}.
For directed graphs, no current algorithm computes the maximum flow between any $k=\omega(1)$ given pairs of nodes faster than the time of $O(k)$ separate \MF computations. However, some results are known in special settings. 
It is possible to compute \MF for $O(n)$ pairs in the time it takes for a single \MF computation~\cite{hao1994faster} and this result is used to find a global minimum cut. However, these pairs cannot be specified in the input.

For directed planar graphs, there is an $O(n\log^3n)$ time algorithm for the \SSMF problem~\cite{lacki2012single}, which immediately yields an $O(n^2\log^3n)$ time algorithm for the All-Pairs version, that is much faster than the time of $O(n^2)$ computations of planar \MF, a problem that can be solved in time $O(n\log n)$~\cite{borradaile2009n}. Based on these results, it was conjectured in~\cite{lacki2012single} that also in general graphs, \APMF can be solved faster than the time required for computing $O(n^2)$ separate maximum $st$-flows.
 
Several hardness results are known for multiple-pairs variants of \MF~\cite{AbboudWY15}. For \STMF in sparse graphs ($m=O(n)$) and $\card{S}=\card{T}=O(\sqrt{n})$, there is an $n^{3/2-o(1)}$ lower bound assuming at least one of the Strong Exponential Time Hypothesis (SETH), 3SUM, and All-Pairs Shortest-Paths (APSP) conjectures is correct (for comprehensive surveys on them, see~\cite{VWSurvey,Vsurvey18}). In addition, they show that \SSMF on sparse graphs requires $n^{2-o(1)}$ time, unless MAX-CNF-SAT can be solved in time $2^{(1-\delta)n}\poly(m)$ for some fixed $\delta>0$, and in particular SETH is false.

We will rely on SETH, a conjecture introduced by~\cite{ImpaSETH}, and on some weaker assumption related to its maximization version, MAX-CNF-SAT. In more detail, SETH states that for every fixed $\varepsilon>0$ there is an integer $k\geq 3$ such that kSAT on $n$ variables and $m$ clauses cannot be solved in time $2^{(1-\varepsilon)n}\poly(m)$, where $\poly(m)$ refers to $O(m^c)$ for unspecified constant $c$. By the sparsification lemma~\cite{Impa01spar}, in order to refute SETH it can be assumed that the number of clauses is $O(n)$. The MAX-CNF-SAT problem asks for the maximum number of clauses that can be satisfied in an input CNF formula. Most of our conditional lower bounds are based on the assumption that for every fixed $\delta>0$, MAX-CNF-SAT cannot be solved in time $2^{(1-\delta)n}\poly(m)$, 
where currently even $2^n/\poly(n)$ algorithms are not known for this problem~\cite{AbboudWY15}. Note that this is a weaker assumption than SETH, since a faster algorithm for MAX-CNF-SAT would imply a faster algorithm for CNF-SAT and refute SETH. Different assumptions regarding the hardness of CNF-SAT have been the basis for many lower bounds, including for the runtime of solving NP-hard problems exactly, parametrized complexity, and problems in P. See the Introduction in~\cite{abboud2017seth} and the references therein.

\subsection{Our Contribution}

We present conditional runtime lower bounds for both uncapacitated and capacitated networks. The proofs appear in sections~\ref{ProofsUncap} and~\ref{ProofsCap}, respectively, where the order reflects increasing level of complication. All our lower bounds hold even when the input $G$ is a DAG and has a constant diameter, and in the case of general capacities, they can be easily modified to apply also for graphs with constant maximum degree. In addition, for integer $k\geq 1$ we use $[k]$ to denote the range $\{1,...,k\}$.

\paragraph*{Capacitated Networks}

Our main result is that for every set sizes $\card{S}$ and $\card{T}$, the \STMF cannot be solved significantly faster than $O(\card{S}\card{T}m)$ (i.e., polynomially smaller runtime), unless a breakthrough in MAX-CNF-SAT is achieved, and consequently in SETH.

\begin{theorem}\label{ST}
If for some fixed constants $\varepsilon>0$, $c_1,c_2\in [0,1]$, \STMF on graphs with $n$ nodes, $\card{S}=\tilde{\Theta}(n^{c_1})$, $\card{T}=\tilde{\Theta}(n^{c_2})$, $m=O(n)$ edges, and capacities in $[n]$ can be solved in time $O((\card{S}\card{T}m)^{1-\varepsilon})$, then for some $\delta(\varepsilon)>0$, MAX-CNF-SAT on $n'$ variables and $O(n')$ clauses can be solved in time $O(2^{(1-\delta)n'})$, and in particular SETH is false.
\end{theorem}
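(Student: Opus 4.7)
The plan is to prove the theorem by a fine-grained reduction from MAX-CNF-SAT to \STMF. Given an instance $\phi$ with $n'$ variables and $m'=O(n')$ clauses, and the fixed constants $\varepsilon, c_1, c_2$ guaranteed by the hypothesized \STMF algorithm, I will construct a directed graph $G=(V,E,w)$ with $|V|=n$, $|E|=O(n)$, capacities in $[n]$, together with subsets $S,T\subseteq V$ of sizes $\tilde\Theta(n^{c_1})$ and $\tilde\Theta(n^{c_2})$, from which the MAX-CNF-SAT value of $\phi$ can be read off the \STMF output. I would first partition the variables into groups $A$ and $B$ of sizes $\alpha n'$ and $(1-\alpha)n'$, with $\alpha:=c_1/(c_1+c_2)$, so that $|S|:=2^{\alpha n'}$ and $|T|:=2^{(1-\alpha)n'}$ fit the required cardinalities relative to $n:=2^{n'/(c_1+c_2)}$ (padding with isolated dummy nodes if needed). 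Then I would introduce a source $s_a$ for each assignment $a\colon A\to\{0,1\}$, a sink $t_b$ for each $b\colon B\to\{0,1\}$, and set $S=\{s_a\}$, $T=\{t_b\}$.

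The key design choice is to make the max-flow count \emph{unsatisfied} clauses rather than satisfied ones, because ``$C_j$ is unsatisfied by $(a,b)$'' is the conjunction of two clean half-conditions---every $A$-literal of $C_j$ is false under $a$ \emph{and} every $B$-literal is false under $b$---which decomposes cleanly between the source and sink sides. For each clause $C_j$ I would build a constant-size gadget with a capacity-$1$ bottleneck edge $(c_j^{\mathrm{in}}, c_j^{\mathrm{out}})$, wired so that a flow path from $s_a$ to $c_j^{\mathrm{in}}$ exists iff $a$ falsifies every $A$-literal of $C_j$, and a path from $c_j^{\mathrm{out}}$ to $t_b$ exists iff $b$ falsifies every $B$-literal of $C_j$. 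Since clauses have bounded width, each half-condition is a specific partial assignment on $O(1)$ variables, realized via shared per-variable ``level nodes'' $\ell_i^v$ (one for each $(x_i,v)$ with $x_i\in A$, $v\in\{0,1\}$, and symmetrically on the $B$-side) together with a short per-clause verification chain, using capacities in $[n]$ to encode the AND of the individual literal-falsifications at the bottleneck. Each source contributes only the $|A|$ edges $s_a\to\ell_i^{a(i)}$, and the sink side is mirror-symmetric, so altogether $|E|=O(|S|+|T|+m')=O(n)$. The max-flow from $s_a$ to $t_b$ then equals the number of clauses unsatisfied by $(a,b)$, yielding the identity $\text{MAX-CNF-SAT}(\phi)=m'-\min_{(s_a,t_b)\in S\times T}\mathrm{maxflow}(s_a,t_b)$, extractable from the \STMF output by a single pass.

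The hard part will be simultaneously enforcing the AND of several literal-falsifications at each clause gadget and keeping the graph sparse ($m=O(n)$): a naive edge from every matching source $s_a$ directly to each $c_j^{\mathrm{in}}$ would already cost $\Theta(|S|\cdot m')$ edges and destroy the sparsity bound, so sharing intermediate structure through the level nodes is essential, and realizing the conjunctive threshold through a single unit bottleneck requires a calibrated combination of unit-capacity verification paths together with the capacity-$[n]$ thresholds at the bottleneck. Once this is in place, the running-time bookkeeping is routine: substituting the parameters, the hypothesized \STMF runtime becomes $(|S||T|m)^{1-\varepsilon}=(2^{n'}\cdot n)^{1-\varepsilon}=2^{n'(1+1/(c_1+c_2))(1-\varepsilon)}$, which for the hypothesized $(\varepsilon, c_1, c_2)$ is $O(2^{(1-\delta)n'})$ for some constant $\delta=\delta(\varepsilon)>0$, contradicting the assumed hardness of MAX-CNF-SAT and in particular refuting SETH.
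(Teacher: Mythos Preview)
Your reduction has a genuine gap in the final running-time step, and it traces back to the two-way variable split. With $n=2^{n'/(c_1+c_2)}$ (after padding) you correctly compute $(|S|\,|T|\,m)^{1-\varepsilon}=2^{n'(1+1/(c_1+c_2))(1-\varepsilon)}$, but this is $O(2^{(1-\delta)n'})$ for some $\delta>0$ only when $(1+1/(c_1+c_2))(1-\varepsilon)<1$, i.e., when $\varepsilon>1/(1+c_1+c_2)\geq 1/3$. For small $\varepsilon$---which is exactly what the theorem must handle---the exponent exceeds $n'$ and the bound is vacuous. Plugging in the \emph{actual} (smaller) edge count $m=\tilde O(|S|+|T|)$ does not help: since $\max(|S|,|T|)\geq 2^{n'/2}$, one still gets $|S|\,|T|\,m\geq 2^{3n'/2}$ and hence exponent at least $\tfrac{3}{2}(1-\varepsilon)n'$. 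The padding trick does realize $|S|=\tilde\Theta(n^{c_1})$ and $|T|=\tilde\Theta(n^{c_2})$, but it inflates $n$ (and with it the running time the \STMF algorithm is allowed) beyond what the reduction can afford.

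The paper resolves this via a \emph{three}-way split $U_1,U_2,U_3$ of sizes $an',(1-a-b)n',bn'$: sources encode assignments to $U_1$, sinks to $U_3$, and a middle layer of $\tilde\Theta(2^{(1-a-b)n'})$ gadgets encodes assignments to $U_2$. Choosing $1-a-b\geq\max(a,b)$ makes the middle layer dominate, so the graph has $N=\tilde\Theta(2^{(1-a-b)n'})$ nodes and $O(N)$ edges with no padding, and crucially $|S|\cdot|T|\cdot N=\tilde\Theta(2^{an'+bn'+(1-a-b)n'})=\tilde\Theta(2^{n'})$. Hence the hypothesized \STMF algorithm runs in time $\tilde O(2^{(1-\varepsilon)n'})$ for \emph{every} $\varepsilon>0$, and taking $a=c_1/(1+c_1+c_2)$, $b=c_2/(1+c_1+c_2)$ realizes arbitrary $c_1,c_2\in[0,1]$. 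As a secondary issue, your gadget sketch leans on ``clauses have bounded width'', but MAX-CNF-SAT has no width bound and the sparsification lemma does not apply to it; the paper's construction avoids this by spending $m'$ edges per source/sink directly, which is affordable precisely because the middle layer dominates $N$.
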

 
This result improves the aforementioned $n^{3/2-o(1)}$ lower bound of~\cite{AbboudWY15}, as for their setting of $\card{S}=\card{T}=O(\sqrt{n})$ our lower bound is $n^{2-o(1)}$, although their lower bound is based on an incomparable (and intuitively weaker) conjecture, that at least one of the SETH, 3SUM, and APSP conjectures is correct. 
In fact, if there was a reduction from SETH that implied a larger runtime lower bound for \STMF, then the (single-pair) \MF problem would require a strictly super-linear time under it, 
but such a reduction is not possible unless the non-deterministic version of SETH (abbreviated NSETH) is false~\cite{carmosino2016nondeterministic}. 
And anyway, such a lower bound for \MF would be an amazing breakthrough.

The next theorem is an immediate corollary of Theorem~\ref{ST}, by assigning $\card{S},\card{T}=\Theta(n)$.

\begin{theorem}\label{AllPairs}
If for some fixed $\varepsilon>0$, \APMF in graphs with $n$ nodes, $m=O(n)$ edges, and capacities in $[n]$ can be solved in time $O((n^2m)^{1-\varepsilon})$, then for some $\delta(\varepsilon)>0$, MAX-CNF-SAT on $n'$ variables and $O(n')$ clauses can be solved in time $O(2^{(1-\delta)n'})$, and in particular SETH is false.
\end{theorem}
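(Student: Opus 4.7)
The plan is to obtain Theorem~\ref{AllPairs} as an immediate specialization of Theorem~\ref{ST}. The key observation is that \APMF on $G=(V,E,w)$ is precisely the instance of \STMF in which the input sets satisfy $S=T=V$. Consequently, any algorithm that solves \APMF on graphs with $n$ nodes, $m=O(n)$ edges, and capacities in $[n]$ in time $O((n^2m)^{1-\varepsilon})$ automatically solves the corresponding \STMF instance within the same time bound, simply by reading off the relevant entries (an essentially free operation when $S=T=V$).

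The concrete step is to invoke Theorem~\ref{ST} with parameters $c_1=c_2=1$, so that $\card{S}=\tilde{\Theta}(n^{c_1})=\tilde{\Theta}(n)$ and $\card{T}=\tilde{\Theta}(n^{c_2})=\tilde{\Theta}(n)$. Under this choice, the quantity $\card{S}\card{T}m$ equals $\tilde{\Theta}(n^2m)$, so the assumed runtime $O((n^2m)^{1-\varepsilon})$ agrees (up to logarithmic factors absorbed by taking a slightly smaller $\varepsilon$) with the bound $O((\card{S}\card{T}m)^{1-\varepsilon})$ required by the hypothesis of Theorem~\ref{ST}. The conclusion of Theorem~\ref{ST}, namely an $O(2^{(1-\delta)n'})$-time algorithm for MAX-CNF-SAT with $n'$ variables and $O(n')$ clauses for some $\delta(\varepsilon)>0$, transfers verbatim to Theorem~\ref{AllPairs}.

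Because the derivation is a direct parameter substitution, there is essentially no obstacle beyond sanity-checking the arithmetic: confirming that $c_1=c_2=1$ lies in the admissible range $[0,1]$ of Theorem~\ref{ST}, and that the polylogarithmic slack between $\tilde{\Theta}(n)$ and $\Theta(n)$ can be absorbed into a slightly smaller exponent $\varepsilon'<\varepsilon$ (which in turn only affects the constant $\delta$ in the conclusion). Once these bookkeeping points are verified, the theorem follows immediately.
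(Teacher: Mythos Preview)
Your proposal is correct and follows exactly the paper's approach: the paper states that Theorem~\ref{AllPairs} is an immediate corollary of Theorem~\ref{ST} obtained by setting $\card{S},\card{T}=\Theta(n)$, which is precisely your specialization $c_1=c_2=1$. Your additional remarks about absorbing polylogarithmic slack into a smaller $\varepsilon$ are fine bookkeeping and do not deviate from the intended argument.
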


This conditional lower bound (see Figure~\ref{figs:SOTA_APMF_2}) shows that \APMF requires time that is equivalent to $\Omega(n^{3/2})$ computations of \MF, which strongly separates the directed case from the undirected one (where a Gomory-Hu tree can be constructed in the time of $n-1$ computations). If \MF takes $\tO(m)$ time, which is currently open but plausible, then the running time of $\tilde{\Omega}(n^2)$ computations of \MF is needed. This is in contrast to the aforementioned conjecture of Lacki, Nussbaum, Sankowski, and Wulf-Nilsen~\cite{lacki2012single} that \APMF in general graphs can be solved faster than the time of $O(n^2)$ computations of maximum $st$-flow.

\paragraph*{Uncapacitated Networks}

For the case of uncapacitated networks, we show that for every $m=m(n)$, \APMF cannot be solved significantly faster than $O(mn)$.
Here we introduce a new technique to design reductions from SETH to graphs with varying edge densities, rather than the usual reductions that only deal with sparse graphs. 

\begin{theorem}\label{UncapAllPairs}
If for some fixed $\varepsilon>0$ and $c\in [1,2]$, \APMF in uncapacitated graphs with $n$ nodes and $m=\tilde{\Theta}(n^{c})$ edges can be solved in time $O((nm)^{1-\varepsilon})$, then for some $\delta(\varepsilon)>0$, MAX-CNF-SAT on $n'$ variables and $O(n')$ clauses can be solved in time $O(2^{(1-\delta)n'})$, and in particular SETH is false.
\end{theorem}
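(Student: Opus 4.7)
The plan is to extend the SETH-based reduction from MAX-CNF-SAT underlying Theorem~\ref{AllPairs} from the sparse setting $m = O(n)$ to arbitrary density $m = \tilde{\Theta}(n^c)$ for $c \in [1,2]$. The sparse core works as follows: given a MAX-CNF-SAT formula on $n'$ variables and $O(n')$ clauses, partition the variables into two halves $A$ and $B$, create a source node $s_\alpha$ for each partial assignment $\alpha \in \{0,1\}^A$ and a sink node $t_\beta$ for each $\beta \in \{0,1\}^B$, and for each clause $C$ insert a clause gadget consisting of a single bottleneck edge. Incoming edges into the gadget come from every $s_\alpha$ satisfying $C$ via an $A$-literal, and outgoing edges go to every $t_\beta$ satisfying $C$ via a $B$-literal, so that in a unit-capacity network each clause gadget supplies exactly one edge-disjoint path from $s_\alpha$ to $t_\beta$ iff the combined assignment $(\alpha,\beta)$ satisfies $C$. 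Because different clause gadgets are edge-disjoint, the max-flow value between $s_\alpha$ and $t_\beta$ equals the number of clauses satisfied, and taking the APMF maximum over all source-sink pairs recovers the MAX-CNF-SAT optimum.

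For arbitrary density $m = \tilde{\Theta}(n^c)$ the novel ingredient is a \emph{density-inflation} step. I would attach to each core node (or to a single auxiliary attachment point) a private dense padding subgraph -- for example, a complete bipartite subgraph whose size depends on $c$ -- through a single degree-one bridge edge. Because the padding is connected to the rest of the graph only through this lone bridge, any flow path that enters the padding must also leave through it, so no edge inside the padding can ever contribute an extra edge-disjoint $s_\alpha$-$t_\beta$ path; this preserves all relevant max-flow values while raising the total edge count to $\tilde{\Theta}(n^c)$. To bring $n$ and $m$ into the right relationship with $n'$, I would also rebalance the variable partition, taking $|A|=a$ and $|B|=n'-a$ with $a$ tuned as a function of $c$ and $\varepsilon$, so that the total node count $n$ carries enough source-sink pairs to encode all $2^{n'}$ partial-assignment combinations while the inflated edge count matches the target density.

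With the parameters set, one plugs the assumed runtime $O((nm)^{1-\varepsilon}) = O(n^{(1+c)(1-\varepsilon)})$ into the reduction to obtain a MAX-CNF-SAT algorithm running in time $O(2^{(1-\delta)n'})$ for some $\delta(\varepsilon,c)>0$, which in particular refutes SETH. The main technical obstacle I anticipate is proving correctness of the density-inflation step: in a unit-capacity network, even a single spurious edge-disjoint path would distort the satisfied-clause count encoded by the max-flow, so one must verify carefully that no combination of padding edges ever creates such a parasitic path between any pair of core nodes. I would address this by keeping the padding on a disjoint vertex support and attaching it through a single cut-edge (or a small separator) whose removal isolates the padding from the core; a short cut-based argument then shows that every $s_\alpha$-to-$t_\beta$ flow ignores the padding entirely, completing the reduction.
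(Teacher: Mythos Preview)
Your density-inflation idea does not work, and no two-way split can rescue it. The core arithmetic is this: with sources indexed by $\{0,1\}^A$ and sinks by $\{0,1\}^B$, the node count is $n=\Theta(2^{\max(|A|,|B|)})\ge 2^{n'/2}$, while the base construction has only $\tO(n)$ edges. Attaching dead padding through cut-edges raises $m$ to $\tilde\Theta(n^c)$ but contributes nothing to the reduction, so the product satisfies $nm=\tilde\Theta(n^{1+c})\ge \tilde\Theta(2^{(1+c)n'/2})$. Plugging in the assumed running time $O((nm)^{1-\varepsilon})$ gives $O(2^{(1+c)(1-\varepsilon)n'/2})$, which beats $2^{n'}$ only when $(1+c)(1-\varepsilon)<2$, i.e.\ when $\varepsilon>(c-1)/(c+1)$. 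For $c=2$ you would need $\varepsilon>1/3$; for arbitrarily small $\varepsilon>0$ and any $c>1$ the reduction yields nothing. Rebalancing $|A|$ versus $|B|$ cannot help: the product $|S|\cdot|T|=2^{n'}$ is fixed, and making the split unequal only increases $n$ further.

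The paper avoids this by not padding at all. It splits the variables into \emph{three} sets $U_1,U_2,U_3$ of sizes $an',(1-2a)n',an'$ (Lemma~\ref{ProofsUncapLemma} with $a=b$), and the enumeration over the middle block $U_2$ is what manufactures the extra edges: every source and every sink is wired to a separate gadget for each of the $2^{(1-2a)n'}$ assignments $\beta$ to $U_2$. This yields $|V|=\Theta(2^{an'})$ but $|E|=\tilde\Theta(2^{(1-a)n'})$, so the density $|E|=\tilde\Theta(|V|^{1/a-1})$ is tuned by the choice of $a\in[1/3,1/2]$, and crucially $|V|\cdot|E|=\tilde\Theta(2^{n'})$ for \emph{every} such $a$. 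The extra edges are not padding; they encode the search over $\beta$, which is why the product $nm$ stays pinned at $\tilde\Theta(2^{n'})$ regardless of $c$. Your proposal is missing exactly this three-way-split idea.

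A secondary issue: your sketch of the sparse core (one bottleneck per clause, edges in from $s_\alpha$ when $\alpha$ satisfies $C$ via an $A$-literal, edges out to $t_\beta$ when $\beta$ satisfies via a $B$-literal) undercounts satisfied clauses, since a clause satisfied only by the $A$-side yields no $s_\alpha$--$t_\beta$ path. The paper's two-way gadget (Section~\ref{GMF}) uses three clause nodes $c_{\vDash,\vDash},c_{\vDash,\nvDash},c_{\nvDash,\vDash}$ to cover all cases, and the three-way gadget in Lemma~\ref{ProofsUncapLemma} is more elaborate still.
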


Hence, a certain additional improvement to the $O(m^{\omega})$ time algorithm of~\cite{cheung2013graph} (and similarly to the $O(n^{\omega}m)$ time for DAGs, where our lower bounds apply too) is not likely.
We now present conditional lower bounds for \STMF, which are functions of $\card{S}$ and $\card{T}$.
\begin{theorem}\label{UncapST}
If for some fixed constants $\varepsilon>0$, $c_1,c_2\in [0,1]$, \STMF on uncapacitated graphs with $n$ nodes, $\card{S}=\tilde{\Theta}(n^{c_1})$, $\card{T}=\tilde{\Theta}(n^{c_2})$, and $O((\card{S}+\card{T})n)$ edges can be solved in time $O((\card{S}\card{T}n)^{1-\varepsilon})$, then for some $\delta(\varepsilon)>0$, MAX-CNF-SAT on $n'$ variables and $O(n')$ clauses can be solved in time $O(2^{(1-\delta)n'})$, and in particular SETH is false.
\end{theorem}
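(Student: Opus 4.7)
The plan is to mirror the reduction underlying Theorem~\ref{ST} (capacitated \STMF) and adapt it to the uncapacitated setting by replacing every capacitated edge with a bundle of parallel unit-capacity edges, carefully tracking the total edge count so as to stay within the $O((\card{S}+\card{T})n)$ budget.

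Given a MAX-CNF-SAT instance $\varphi$ with $n'$ variables and $m'=O(n')$ clauses, together with target constants $c_1,c_2\in[0,1]$, I would partition the variable set into two groups $A$ and $B$ with $\card{A}+\card{B}=n'$, chosen so that after optionally padding with a polynomial (in $n'$) number of dummy nodes, one obtains a graph on $n$ nodes with $\card{S}:=2^{\card{A}}=\tilde\Theta(n^{c_1})$ sources and $\card{T}:=2^{\card{B}}=\tilde\Theta(n^{c_2})$ sinks. Each source $s_a\in S$ corresponds to a partial assignment on $A$ and each sink $t_b\in T$ to a partial assignment on $B$. A variable-clause gadget of size polynomial in $n'$, essentially the one used in the proof of Theorem~\ref{ST}, is placed between the source layer and the sink layer so that the maximum $s_a$-to-$t_b$ flow is equal (up to a fixed additive constant) to the number of clauses of $\varphi$ satisfied by the combined assignment $a\cup b$. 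Thus an \STMF computation on this graph determines the optimal MAX-CNF-SAT value.

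To pass to the uncapacitated setting, I would replace every edge of capacity $c$ by $c$ parallel unit-capacity edges (or by $c$ edge-disjoint length-two paths through fresh intermediate vertices, if one needs to avoid multi-edges). In the Theorem~\ref{ST} construction, only the edges incident to the source set or the sink set carry large capacity (of order $m'$), while the inner gadget uses only $O(1)$ or polynomial-in-$n'$ capacities. Summing the blowup over all edges then yields a total of $O((\card{S}+\card{T})\cdot n')$ edges, which is $O((\card{S}+\card{T})\cdot n)$ as required, and the graph remains a DAG of small diameter. A hypothetical $O((\card{S}\card{T}n)^{1-\varepsilon})$-time algorithm for \STMF on this graph therefore solves MAX-CNF-SAT in time $(\card{S}\card{T}n)^{1-\varepsilon}$ up to polynomial factors, which for appropriately tuned split sizes $\card{A},\card{B}$ evaluates to $2^{(1-\delta)n'}$ for some $\delta(\varepsilon)>0$, contradicting the MAX-CNF-SAT hardness assumption and hence SETH.

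The main obstacle is the edge-count bookkeeping in the uncapacitated conversion: one must verify that in the Theorem~\ref{ST} construction the large-capacity edges are localized at the source/sink interface, so that the parallel-edge blowup fits into the prescribed $O((\card{S}+\card{T})n)$ budget rather than overshooting to something like $O((\card{S}+\card{T})n^2)$. A secondary subtlety is tuning the variable split $\card{A},\card{B}$ and the amount of dummy padding to the specific pair $(c_1,c_2)$ provided by the hypothesized algorithm, so that the runtime bound plugged back into the reduction yields a MAX-CNF-SAT algorithm strictly faster than $2^{n'}$; when a direct balanced split does not immediately suffice, an iteration over partial assignments to a small auxiliary subset of variables (calling the \STMF oracle once per outer assignment) can amplify the savings into the required $\delta>0$.
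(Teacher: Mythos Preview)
Your plan has a genuine gap: the two-way variable split you describe cannot realize an arbitrary pair $(c_1,c_2)\in[0,1]^2$, and the external iteration you propose as a fix does not recover the result for small~$\varepsilon$.

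Concretely, with a two-way split into $A$ and $B$ and a polynomial-size clause gadget in between, the graph has $n\approx \max(\card{S},\card{T})+\poly(n')$ nodes, which forces $\max(c_1,c_2)\approx 1$. To hit a prescribed $(c_1,c_2)$ with both strictly below~$1$ you must inflate $n$ artificially; but then $\card{S}\card{T}n$ overshoots $2^{n'}$, and the $\varepsilon$-savings of the oracle is wasted on the padding. A short calculation shows that, whether you pad or iterate over an auxiliary subset $C$ (calling the oracle $2^{\card{C}}$ times), the total running time is below $2^{n'}$ only when $\varepsilon>1/(1+c_1+c_2)$, so the argument breaks for small $\varepsilon$. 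Incidentally, your description of the Theorem~\ref{ST} construction is also off: its middle layer is not polynomial-size, and the large capacities there are $2^{(1-a-b)n'}$, not $O(m')$.

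What the paper actually does (Lemma~\ref{ProofsUncapLemma}) is a \emph{three}-way split $U_1,U_2,U_3$ of sizes $an',(1-a-b)n',bn'$, with the $2^{(1-a-b)n'}$ partial assignments to $U_2$ \emph{embedded inside the graph} rather than iterated over externally. This makes $n=\tilde\Theta(2^{(1-a-b)n'})$ while a single \STMF call simultaneously tests all middle assignments, so $\card{S}\card{T}n=\tilde\Theta(2^{n'})$ on the nose and any $\varepsilon>0$ yields $\delta>0$. Choosing $a=c_1/(1+c_1+c_2)$ and $b=c_2/(1+c_1+c_2)$ then hits every $(c_1,c_2)\in[0,1]^2$. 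Your parallel-edge idea would in fact reproduce exactly these parameters \emph{if} applied to the true three-way capacitated construction of Lemma~\ref{ProofsCapLemma}; the missing ingredient is the internal middle layer, not the uncapacitation step.
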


\begin{figure}[!ht]
	\centering
		\includegraphics[width=1.0\textwidth,left]{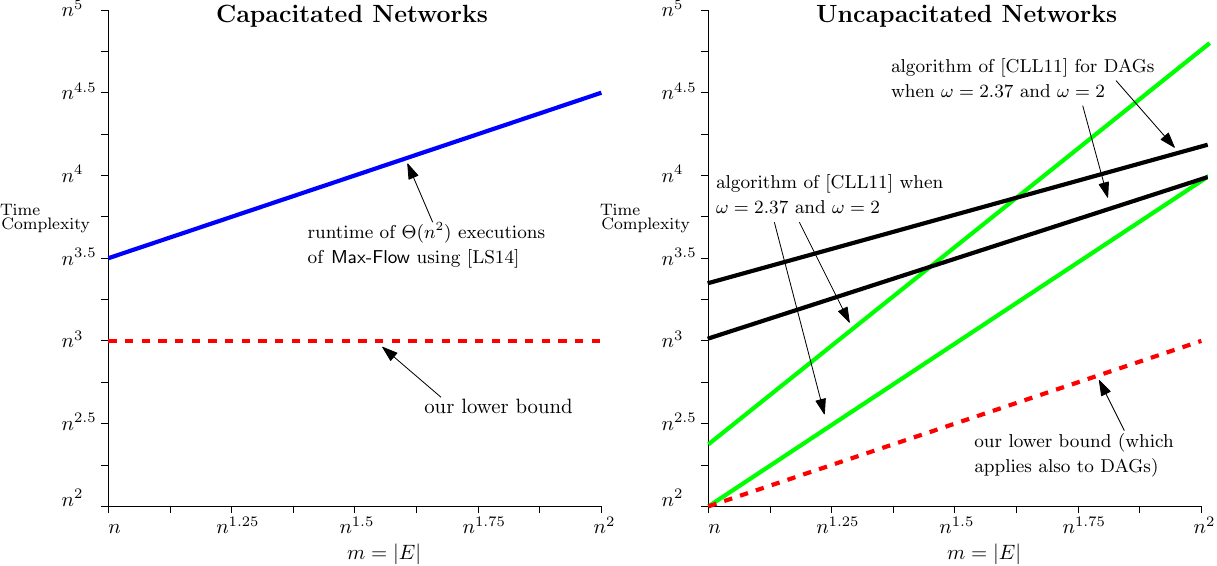}
   \caption[-]{State of the art bounds for \APMF in directed networks. Conditional lower bounds are depicted in dashed lines, and known algorithms in solid lines.
   }
   \label{figs:SOTA_APMF_2}
\vspace{.1in}\hrule
\end{figure}

In addition, we present a conditional lower bound for computing the \MLEC of sparse graphs, which is the same as \GMF if all the capacities are $1$, that is indeed the case in our reduction. The next result, proved in Section~\ref{GMF}, was obtained together with Bundit Laekhanukit and Rajesh Chitnis, and we thank them for their permission to include it here.

\begin{theorem}\label{MLEC}
If for some fixed $\varepsilon>0$, the \MLEC in graphs with $n$ nodes and $\tO(n)$ edges can be found in time $O(n^{2-\varepsilon})$, then for some $\delta(\varepsilon)>0$, MAX-CNF-SAT on $n'$ variables and $O(n')$ clauses can be solved in time $O(2^{(1-\delta)n'})$, and in particular SETH is false.
\end{theorem}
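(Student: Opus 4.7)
The plan is a classical ``split-and-list'' reduction from MAX-CNF-SAT to \MLEC, arranging the graph so that its global maximum local edge connectivity equals the maximum number of clauses that the input formula $\phi$ can satisfy. Given $\phi$ on $n'$ variables and $m'=O(n')$ clauses, I would split the variables as $X=X_1\uplus X_2$ with $|X_1|=|X_2|=n'/2$, set $N=2^{n'/2}$, and for each partial assignment $A\in\{0,1\}^{X_1}$ introduce a source node $s_A$, and for each partial assignment $B\in\{0,1\}^{X_2}$ a sink node $t_B$. The rest of the graph is shared across all assignments and encodes the clauses; the only assignment-specific information lives in the edges incident to the $s_A$'s and $t_B$'s.

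For each clause $c_j$ I would build a small unit-capacity ``two-route bottleneck gadget'' on six nodes: in-ports $c_j^L,c_j^R$, a single bottleneck edge $c_j^{\mathrm{in}}\to c_j^{\mathrm{out}}$, and out-ports $c_j^{L'},c_j^{R'}$, connected by the forward edges $c_j^L\to c_j^{\mathrm{in}}$, $c_j^R\to c_j^{\mathrm{in}}$, $c_j^{\mathrm{out}}\to c_j^{L'}$, $c_j^{\mathrm{out}}\to c_j^{R'}$. I would attach sources and sinks by adding $s_A\to c_j^L$ exactly when $A$ satisfies $c_j$ on $X_1$, $s_A\to c_j^R$ unconditionally, $c_j^{L'}\to t_B$ unconditionally, and $c_j^{R'}\to t_B$ exactly when $B$ satisfies $c_j$ on $X_2$. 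The ``left route'' $s_A\to c_j^L\to\cdots\to c_j^{L'}\to t_B$ is open iff $A$ satisfies $c_j$, the ``right route'' is open iff $B$ satisfies $c_j$, and the single bottleneck forces each gadget to carry at most one unit. Hence for every pair $(s_A,t_B)$ the maximum flow equals $|C(A)\cup C(B)|$, the number of clauses satisfied by the combined assignment, giving $\max_{A,B}\,\mathrm{maxflow}(s_A,t_B)=\text{MAX-CNF-SAT}(\phi)$.

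Orienting all edges forward makes the graph a layered DAG of constant depth, and a routine case check shows that for every pair $(u,v)$ not of the form $(s_A,t_B)$ the maximum flow is $O(1)$: either $v$ sits strictly after a gadget bottleneck (capping flow at $1$), or one endpoint is a clause sub-node whose degree into or out of its gadget is at most $2$. Since a trivial random-assignment bound gives $\text{MAX-CNF-SAT}(\phi)\ge m'/2$, for nontrivial instances the global \MLEC equals exactly $\text{MAX-CNF-SAT}(\phi)$. The node count is $n=\Theta(N)=\Theta(2^{n'/2})$ and the edge count is $O(Nm')=O(n\log n)=\tO(n)$, matching the hypothesized sparsity, and the construction takes $\tO(n)$ time.

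A hypothetical $O(n^{2-\varepsilon})$ algorithm for \MLEC would therefore solve MAX-CNF-SAT in time $O(N^{2-\varepsilon})=O(2^{(1-\varepsilon/2)n'})$ plus the $\tO(N)$ construction overhead, which is absorbed into $\delta=\varepsilon/3$ (say), in particular refuting SETH. The main conceptual step I expect to be subtle is the ``union'' gadget: a single-route gadget would realize only $|C(A)|$ (a function of $A$ alone) and would not capture $|C(A)\cup C(B)|$, so the two-port bottleneck with one selective and one unconditional edge on each side is essential. Everything else is accounting: verifying that the unconditional edges $s_A\to c_j^R$ and $c_j^{L'}\to t_B$ do not blow up the edge count beyond $\tO(n)$ (they do not, because $m'=O(\log n)$), and that no unintended pair $(u,v)$ achieves connectivity exceeding $m'$.
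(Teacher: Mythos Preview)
Your high-level plan (split the variables, make a layered DAG with source/sink nodes for partial assignments and a constant-size gadget per clause, then read off MAX-CNF-SAT as the global maximum local edge connectivity) is exactly the paper's approach, and your accounting for node/edge counts and for the ``other pairs'' is fine. The problem is the clause gadget itself.

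In your six-node gadget both in-ports feed the \emph{same} bottleneck and both out-ports leave it, so besides the intended ``left route'' and ``right route'' there is also the cross route
\[
s_A \;\to\; c_j^{R} \;\to\; c_j^{\mathrm{in}} \;\to\; c_j^{\mathrm{out}} \;\to\; c_j^{L'} \;\to\; t_B,
\]
whose first and last edges are precisely the two edges you declared \emph{unconditional}. Hence this route is open for every $A,B$ and every clause $c_j$, and the bottleneck then forces the flow through each gadget to be exactly $1$. Consequently $\mathrm{maxflow}(s_A,t_B)=m'$ for \emph{all} pairs $(A,B)$, and the construction carries no information about which clauses are satisfied. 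Your claim that the gadget realizes $|C(A)\cup C(B)|$ is therefore false; a single shared bottleneck with one selective and one unconditional port on each side cannot encode an OR of two conditions without also admitting the always-open cross term.

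The paper avoids this by using a different (and simpler) gadget: for each clause $C_i$ it creates three middle nodes $c_{\vDash,\vDash}$, $c_{\vDash,\nvDash}$, $c_{\nvDash,\vDash}$, one for each of the three ``at least one of $\alpha,\beta$ satisfies'' patterns, and wires $\alpha$ to the two nodes consistent with whether $\alpha\vDash C_i$ and (symmetrically) the two nodes consistent with $\beta$ to $\beta$. For each $(\alpha,\beta,C_i)$ exactly one of the four sat/unsat patterns holds; if it is not $(\nvDash,\nvDash)$ then exactly one length-$2$ path through the corresponding node exists, and otherwise none does. This yields $\mathrm{maxflow}(\alpha,\beta)$ equal to the number of clauses satisfied by $(\alpha,\beta)$, which is what you need. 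Replacing your gadget by this one (or any gadget that enumerates the three good patterns separately rather than merging two routes through one bottleneck) makes the rest of your argument go through verbatim.
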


\paragraph*{Generalization to Bounded Cuts}
Finally, we show in Section~\ref{GTSC} that our lower bounds extend to 
the version that requires to output the maximum-flow value only for 
source-sink pairs for which this value is at most some given threshold $k$.

\paragraph*{Connection to the Orthogonal Vectors Problem}
Our techniques are based on partitioning the variable set of CNF-SAT to sets of different sizes, and constructing graphs with the property that certain pairs of nodes would have smaller maximum flow between them if and only if they correspond to a satisfying assignment. This approach is inspired by results of Williams~\cite{WilliamsOV}.

We remark that all of our theorems can also be proved assuming that for the appropriate $k\in\{2,3\}$, the \ProblemName{$k$-Orthogonal Vectors} ($k$OV) problem cannot be solved in time $\tO(n^{k-\varepsilon})$ for a fixed constant $\varepsilon>0$, in what is called the $k$OV Hypothesis (see~\cite{VWSurvey,Vsurvey18}). In the $k$OV problem the input is $k$ sets $\{U_i\}_{i\in [k]}$, each of $n$ vectors from $\{0,1\}^d$, and the goal is to find $k$ vectors $\{u_i\}_{i\in [k]}$, one from each set, such that $u_1\cdot...\cdot u_k:=\sum_{i=1}^{d} \prod_{j=1}^{k} u_j[i]= 0$ (for $k=2$ it means that $u_1,u_2$ are orthogonal). An equivalent version of the problem has $U_1=...=U_k$. Solving $k$OV in time $O(n^kd)$ can be done easily by exhaustive search, while the fastest known algorithm for the problem runs in time $n^{k-1/\Theta(\log(d/\log n))}$~\cite{Abboud15,Chan16}. Williams~\cite{WilliamsOV} proved that SETH implies the non-existence of an $\tO(n^{k-\varepsilon})$-time algorithm.


\section{Reduction to Multiple-Pairs \ProblemName{Max-Flow} with Unit Capacity}\label{ProofsUncap}

In this section we prove Theorems~\ref{UncapAllPairs} and~\ref{UncapST}. We start with a general lemma which is the heart of the proofs.

\begin{lemma}\label{ProofsUncapLemma}

Let $a\in [0,1]$ and $b\in [0,1-a]$. Then MAX-CNF-SAT on $n$ variables and $m$ clauses $\{C_i\}_{i\in [m]}$ can be reduced to $O(m)$ instances of \STMF with $\card{S}=2^{an}$ and $\card{T}=2^{bn}$ in graphs with $\Theta(2^{an}+2^{(1-a-b)n}p+2^{bn})$ nodes, $\Theta((2^{an}+2^{bn})\cdot 2^{(1-a-b)n}m)$ edges, and capacities in $\{0,1\}$.
\end{lemma}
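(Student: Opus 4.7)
The plan is to partition the $n$ variables into three disjoint sets $A,B,C$ of sizes $\lfloor an\rfloor$, $n-\lfloor an\rfloor-\lceil bn\rceil$, and $\lceil bn\rceil$, introduce a source $s_\alpha$ for each $\alpha\in\{0,1\}^A$ and a sink $t_\gamma$ for each $\gamma\in\{0,1\}^C$ (so $\card{S}=2^{an}$ and $\card{T}=2^{bn}$), and build one \STMF instance $G_k$ for each threshold $k\in\{0,1,\ldots,m\}$; this yields $O(m)$ instances in all, differing only in the value of $k$. In each $G_k$, for every $\beta\in\{0,1\}^B$ I attach a per-$\beta$ gadget of $\Theta(m)$ middle nodes, giving $\Theta(2^{(1-a-b)n}m)$ middle nodes in total.

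The design target for each $\beta$-gadget is that it carries exactly one unit of $s_\alpha$--$t_\gamma$ flow when $\card{\mathrm{Sat}(\alpha,\beta,\gamma)}\ge k$ and zero otherwise. Since the gadgets are edge-disjoint between $s_\alpha$ and $t_\gamma$ their contributions add, so the max flow in $G_k$ equals
\[
f_k(\alpha,\gamma)\;=\;\card{\{\beta\in\{0,1\}^B:\card{\mathrm{Sat}(\alpha,\beta,\gamma)}\ge k\}},
\]
and therefore MAX-CNF-SAT is at least $k$ iff some pair $(\alpha,\gamma)$ gives $f_k(\alpha,\gamma)\ge 1$; the MAX-CNF-SAT value is read off as the largest such $k$ in polynomial-time post-processing across the $O(m)$ flow matrices. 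To realise the gadget I use a ``chain with tolls and a budget'': a chain $w_\beta^{(0)}\to w_\beta^{(1)}\to\cdots\to w_\beta^{(m)}$ with $s_\alpha\to w_\beta^{(0)}$ and $w_\beta^{(m)}\to t_\gamma$, a shared budget bottleneck of capacity $m-k$ (implemented by $m-k$ parallel unit sub-paths), and, per clause $i$, two constant-size toll sub-gadgets. The $i$-th link is passable in four ways: directly iff $\beta$ satisfies $C_i$ via a $B$-literal; through an $\alpha$-toll whose sole $\alpha$-dependent edge (from $s_\alpha$) is present iff $\alpha$ satisfies $C_i$ via an $A$-literal; through a symmetric $\gamma$-toll; or through the budget (consuming one of its $m-k$ units). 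A standard min-cut argument then gives that the unit of chain flow is feasible iff at most $m-k$ clauses are unsatisfied by $(\alpha,\beta,\gamma)$, i.e., iff $\card{\mathrm{Sat}(\alpha,\beta,\gamma)}\ge k$. The gadget uses $O(m)$ nodes and $O(m)$ internal edges per $\beta$; the conditional edges from $S$ into the tolls and from the tolls into $T$ dominate and contribute $\Theta((\card{S}+\card{T})\cdot 2^{(1-a-b)n}m)$ edges, matching the stated totals.

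The main obstacle I anticipate is realising the toll sub-gadgets as genuine AND-gates: because every middle-to-middle edge has to be independent of $\alpha$ and $\gamma$, the chain's passage through the $\alpha$-toll at clause $i$ must be \emph{forced} to consume the single $s_\alpha$-conditional edge into $a_{\beta,i}$. The standard fix is a node split combined with a small drain sub-gadget, so that any chain flow through $a_{\beta,i}$ simultaneously absorbs the unit supplied by $s_\alpha$; otherwise the drain becomes saturated and the chain is blocked at link $i$. I would verify correctness by an explicit min-cut computation (the cut inside the toll has size $0$ in the ``block'' case and size $1$ in the ``pass'' case), and then confirm by edge counting that the construction stays within the claimed $\Theta((\card{S}+\card{T})\cdot 2^{(1-a-b)n}m)$ edge budget.
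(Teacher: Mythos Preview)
Your high-level reduction scheme (three-way variable split into sizes $an,(1{-}a{-}b)n,bn$; one graph $G_k$ per threshold $k\in[m]$; node-disjoint per-$\beta$ gadgets whose contributions add) matches the paper. The gap is in the gadget itself.

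Your per-$\beta$ gadget tries to realise a $0/1$ indicator via a chain whose $i$-th link is passable through an ``$\alpha$-toll'' only when the conditional arc $s_\alpha\to a_{\beta,i}$ is present. But max flow cannot express this kind of gating: the internal path $w_\beta^{(i-1)}\to a_{\beta,i}\to w_\beta^{(i)}$ consists of arcs that are independent of $\alpha$, and the presence or absence of an \emph{extra incoming} arc $s_\alpha\to a_{\beta,i}$ cannot block it. Your proposed fix (a drain that ``must'' absorb the $s_\alpha$-unit, else it swallows the chain unit) does not work either: in $s$--$t$ max flow there is no mechanism forcing a drain arc to carry flow; if the drain is a dead end no flow goes there, and if it is routed to $t_\gamma$ it simply adds a unit to the flow value rather than gating the chain. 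Either way, when $\alpha\nvDash C_i$ the chain can still traverse the toll, so the link is not blocked. Even granting the gating, the extra $s_\alpha\to\text{drain}\to t_\gamma$ units would pollute the count, so $f_k(\alpha,\gamma)$ would no longer equal $\card{\{\beta:\card{\mathrm{Sat}(\alpha,\beta,\gamma)}\ge k\}}$. In short, the ``AND-gate'' you flag as the main obstacle is not resolved by the node-split-plus-drain idea.

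The paper sidesteps this entirely with a different encoding. In each $\beta$-block the out-degree of $\alpha$ is exactly $m$, and the max flow through that block is $m$ if $(\alpha,\beta,\gamma)$ satisfies at most $p-1$ clauses and at most $m-1$ otherwise. The ``clause-unsatisfied'' test is a two-hop path $\alpha\to\beta_i^l\to\gamma$ whose first hop exists iff neither $\alpha$ nor $\beta$ satisfies $C_i$ and whose second hop exists iff $\gamma$ does not satisfy $C_i$; both conditional arcs are of the natural source-outgoing / sink-incoming type, so no gate is needed. The remaining $p-1$ units are routed through a single bottleneck node $\beta'$ with out-degree $p-1$ (to nodes $\beta'_j$, each with an unconditional arc to $\gamma$). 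With this design the global flow in $G_p$ drops below $2^{(1-a-b)n}m$ iff some triple satisfies at least $p$ clauses, and the min-cut/max-flow verification is a direct path count.
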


\begin{proof}
Given a CNF-formula $F$ on $n$ variables and $m$ clauses as input for MAX-CNF-SAT, $a\in [0,1]$, and $b\in [0,1-a]$, we split the variables into three sets $U_1$, $U_2$, and $U_3$, where $U_1$ is of size $a n$, $U_2$ is of size $(1-a-b) n$, and $U_3$ is of size $b n$, and enumerate all their $2^{an}$, $2^{(1-a-b)n}$, and $2^{bn}$ partial assignments (with respect to $F$), respectively, when the objective is to find a triple $(\alpha,\beta,\gamma)$ of assignments to $U_1$, $U_2$, and $U_3$ respectively, that satisfies the maximal number of clauses. We will have an instance $G_p$ of \STMF for each value $p\in [m]$, in which by one call to \STMF we check if there exists a triple $\alpha$, $\beta$, and $\gamma$ that satisfies at least $p$ clauses, as follows.

We construct a graph $G_p$ for every $p\in [m]$ on $N$ nodes $V_1\cup V_2\cup V_3$, where $V_1$ contains a node $\alpha$ for every assignment $\alpha$ to $U_1$, $V_2$ contains $2m+1+(p-1)=2m+p$ nodes for every assignment $\beta$ to $U_2$, that are $\beta_i^{l}$ and $\beta_i^{r}$ for every $i\in[m]$, $\beta'$, and the set $\{\beta'_i\}_{i\in [p-1]}$, and $V_3$ contains a node $\gamma$ for every assignment $\gamma$ to $U_3$. We use the notation $\alpha$ for nodes in $V_1$ and for assignments to $U_1$, $\beta$ for assignments to $U_2$, and $\gamma$ for nodes in $V_3$ and assignments to $U_3$. However, it will be clear from the context. Now, we have to describe the edges in the network. In order to simplify the reduction, we partition the edges into blue and red colors, as follows.

For every $\alpha$, $\beta$, and $i\in [m]$, we add a blue edge from $\alpha$ to $\beta_i^{l}$ if both of $\alpha$ and $\beta$ do not satisfy the clause $C_i$ (do not set any of the literals to true), and otherwise we add a red edge from $\alpha$ to $\beta_i^{r}$. We further add, for every $\beta$, $\gamma$, and $i\in [m]$, a blue edge from $\beta_i^{l}$ to $\gamma$ if $\gamma$ does not satisfy $C_i$. For every $\beta$, $\gamma$, and $j\in [p-1]$, we add a red edge from every $\beta'_j$ to every $\gamma$. For every $\beta$ and $i\in [m]$, we add a red edge from $\beta_i^{l}$ to $\beta_i^{r}$ and from $\beta_i^{r}$ to $\beta'$, and finally for every $\beta$ and $j\in [p-1]$, we add a red edge from $\beta'$ to $\beta'_j$, where all edges are of capacity $1$.

The graph we built has $2^{an}+2\cdot 2^{(1-a-b)n}m+2^{(1-a-b)n}+2^{(1-a-b)n}(p-1)+2^{bn}=\Theta(2^{an}+2^{(1-a-b)n}m+2^{bn})$ nodes, $2^{an}\cdot 2^{(1-a-b)n}m+2^{bn}\cdot 2^{(1-a-b)n}m+2\cdot 2^{(1-a-b)n}m+$ $(p-1)2^{(1-a-b)n}+2^{bn}\cdot (p-1)2^{(1-a-b)n}=\Theta((2^{an}+2^{bn})\cdot 2^{(1-a-b)n}m)$ edges, with capacities in $\{0,1\}$ (see Figure~\ref{Figs:REDUCTION_UNCAP}), and its construction time is asymptotically the same as the time it takes to output its edge set.

\begin{figure}[!ht]
       \includegraphics[width=0.9\textwidth,left]{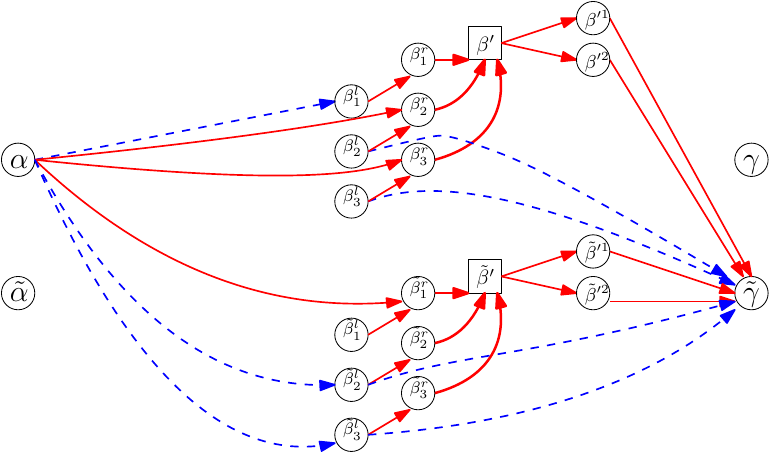}
   \caption[-]{An illustration of part of the reduction. Here, $U_1$, $U_2$, and $U_3$ have $2$ assignments each, $\alpha$ and $\tilde{\alpha}$ to $U_1$, $\beta$ and $\tilde{\beta}$ to $U_2$, and $\gamma$ and $\tilde{\gamma}$ to $U_3$. Blue edges are dashed. For simplicity, only the edges of $G_3^{\alpha,\beta,\tilde{\gamma}}\cup G_3^{\alpha,\tilde{\beta},\tilde{\gamma}}$ are presented. In this illustration, $\alpha$ does not satisfy anything, $\beta$ satisfies $C_2$ and $C_3$, $\tilde{\beta}$ satisfies $C_1$, and $\tilde{\gamma}$ satisfies $C_1$. Note that the assignment comprised of $\alpha$, $\beta$, and $\tilde{\gamma}$ satisfies all the clauses, and indeed the maximum flow from $\alpha$ to $\gamma$ is $2\cdot 3 -1=5$.
   }
   \label{Figs:REDUCTION_UNCAP}
\vspace{.1in}\hrule
\end{figure}

For every $\alpha$, $\beta$, and $\gamma$, we denote by $G_p^{\alpha,\beta,\gamma}$ the graph induced from $G_p$ on the nodes

$$\{\alpha, \beta', \gamma \} \cup \Bigg( \bigcup_{\substack{y\in \{l, r\} \\ i\in [m]}} \{\beta_i^{y}\}\Bigg) \cup \Bigg( \bigcup_{\substack{j\in [p-1]}} \{\beta'_j\}\Bigg).$$

We claim that for every $\alpha$ and $\gamma$, the maximum flow from $\alpha$ to $\gamma$ can be bounded by the sum, over all $\beta$, of the maximum flow between them in $G_p^{\alpha,\beta,\gamma}$. This claim follow easily because the intersection $G_p^{\alpha,\beta_1,\gamma}\cap G_p^{\alpha,\beta_2,\gamma}$ for $\beta_1\neq\beta_2$ is exactly the source and the sink $\{\alpha,\gamma\}$, no edge passes between these two graphs, and $\Big( \bigcup_{\substack{\beta}} G_i^{\alpha,\beta,\gamma}\Big)$ consists of all nodes that are both reachable from $\alpha$ and $\gamma$ is reachable from them.

We now prove that if there is an assignment to $F$ that satisfies at least $p$ clauses then the graph $G_p$ we built has a triple $\alpha, \beta, \gamma $ with maximum flow from $\alpha$ to $\gamma$ in $G_p^{\alpha,\beta,\gamma}$ at most $m-1$. Since for every $\tilde{\beta}$, $m$ is the number of outgoing edges from $\alpha$ in $G_p^{\alpha,\tilde{\beta},\gamma}$, $m$ is also an upper bound for the maximum flow from $\alpha$ to $\gamma$ in it, and hence in $G_p$ it is at most $2^{(1-a-b)n}m-1$. Otherwise, we will show that every triple $\alpha, \beta, \gamma$ has a maximum flow from $\alpha$ to $\gamma$ in $G_p^{\alpha,\beta,\gamma}$ of size at least $m$, and so in $G_p$ it is at least $2^{(1-a-b)n}m$. Hence, by simply picking the maximal $j\in [m]$ such that the maximum flow in $G_j$ of some pair $\alpha,\gamma$ is at most $2^{(1-a-b)n}m-1$, and then by iterating over all assignments $\beta$ to $U_2$ with $\alpha$ and $\gamma$ fixed as the assignments to $U_1$ and $U_3$, we can also find the required triple $\alpha,\beta,\gamma$.

For the first direction, assume that $F$ has an assignment that satisfies at least $p$ clauses, and denote such assignment by $\Phi$. Let $\alpha_{\Phi}$, $\beta_{\Phi}$, and $\gamma_{\Phi}$ be the assignments to $U_1$, $U_2$, and $U_3$, respectively, that are induced from $\Phi$. Since a blue path from $\alpha_{\Phi}$ through ${(\beta_{\Phi})}_i^l$ for some $i\in [m]$ to $\gamma_{\Phi}$ corresponds to $\alpha_{\Phi}$, $\beta_{\Phi}$, and $\gamma_{\Phi}$ all do not satisfy $C_i$, in $G_p^{\alpha_{\Phi},\beta_{\Phi},\gamma_{\Phi}}$ there are at most $m-p$ (internally) disjoint blue paths from $\alpha$ to $\gamma$. As the only way to ship flow in $G_p^{\alpha_{\Phi},\beta_{\Phi},\gamma_{\Phi}}$ that is not through a blue path is through the node $\beta_{\Phi}'$, and the total number of edges going out of this node is $p-1$, we conclude that the total maximum flow in $G_p^{\alpha_{\Phi},\beta_{\Phi},\gamma_{\Phi}}$ from $\alpha_{\Phi}$ to $\beta_{\Phi}$ is bounded by $m-p+(p-1) = m-1$. Since for every $\beta$, the maximum amount of flow that can be shipped in $G_p^{\alpha_{\Phi},\beta,\gamma_{\Phi}}$ from $\alpha_{\Phi}$ to $\gamma_{\Phi}$ is at most $m$, summing over all $\beta$ we get that the total flow in $G_p$ from $\alpha_{\Phi}$ to $\gamma_{\Phi}$ is bounded by $(2^{(1-a-b)n}-1)m+(m-1)\leq 2^{(1-a-b)n}m-1$, as required.

For the second direction, assume that every assignment to $F$ satisfies at most $p-1$ clauses. In order to show that the maximum flow from every $\alpha$ to every $\gamma$ is at least $2^{(1-a-b)n}m$, we first fix $\alpha$, $\beta$, and $\gamma$. Then, by passing flow in two phases we show that $m$ units of flow can be passed in $G_p^{\alpha,\beta,\gamma}$ from $\alpha$ to $\gamma$. As this argument applies for every $\beta$, we can add up the respective flows without violating capacities, concluding the proof. By the assumption, there exist $m-(p-1) = m-p+1$ $i$'s, such that $\alpha$, $\beta$, and $\gamma$ do not satisfy $C_i$, and we denote a set with this amount of such $i$'s by $I_{\beta}$. Each of these $i$'s induces a blue path $(\alpha\rightarrow \beta_i^l\rightarrow \gamma)$ from $\alpha$ to $\gamma$ in $G_p^{\alpha,\beta,\gamma}$, and so we ship a unit of flow through every one of them according to $I_{\beta}$, in what we call the first phase. In the second phase, we ship additional $m-(m-p+1)=p-1$ units in the following way. Let $A_1:=\{i\in [m]\setminus I_{\beta} : \alpha\nvDash C_i \wedge \beta\nvDash C_i\}$, and $A_2:=([m]\setminus I_{\beta})\setminus A_1 = \{i\in [m]\setminus I_{\beta} : \alpha\vDash C_i \vee \beta\vDash C_i\}$, where $\alpha\vDash C_i$ denotes that the assignment $\alpha$ satisfies $C_i$ (as defined earlier), and $\alpha\nvDash C_i$ denotes that it does not satisfy $C_i$. Let $f:A_1\cup A_2\rightarrow [m-\card{I_{\beta}}]$ be a bijective function such that the range of $A_1$ is $[\card{A_1}]$ and the range of $A_2$ is $[m-\card{I_{\beta}}]\setminus [\card{A_1}]$. Clearly, there exists such bijection and it is easy to find one. For every $i\in A_1$ we ship flow through the path $(\alpha\rightarrow\beta_i^{l}\rightarrow \beta_i^{r}\rightarrow \beta'\rightarrow\beta'_j\rightarrow \gamma)$, and for every $i\in A_2$ through the path $(\alpha\rightarrow \beta_i^{r}\rightarrow \beta'\rightarrow\beta'_j\rightarrow \gamma)$, in both cases with $j=f(i)$. 

Since we defined the flow in paths, we only need to show that the capacity requirements hold, and we start with blue edges. Indeed, edges of the form $(\alpha, \beta_i^l)$ are used in the first phase, with flow that is determined uniquely by $\beta$ and $i\in I_{\beta}$, and in the second phase uniquely according to $\beta$ and $i\in [m]\setminus I_{\beta}$, and so they cannot be used twice. Edges of the form $(\beta_i^l,\gamma)$ are only used in the first phase, and their flow is uniquely determined according to $\beta$ and $i\in I_{\beta}$, and so are good too. We now proceed to red edges, which were used only in the second phase. 

Edges of the forms $(\alpha,\beta_i^r)$, $(\beta_i^l,\beta_i^r)$ and $(\beta_i^r,\beta')$ have flow that is uniquely determined by $\beta$ and $i\in [m]\setminus I_{\beta}$, and so are not used more than once. Edges of the form $(\beta',\beta'_j)$ have flow that is uniquely determined by $\beta$ and $j=f(i)\in [p-1]$, and since $f$ is a bijection, every $j$ has at most one $i$ such that $f(i)=j$, and so these edges are also used at most once. As a byproduct, and since every edge of the form $(\beta'_j,\gamma)$ has only the edge $(\beta',\beta'_j)$ as its source for flow, edges of the form $(\beta'_j,\gamma)$ are also used at most once. Altogether, we have bounded the total flow in all edges that were used in both phases, and so the capacity requirements follow, which completes the proof of the second direction and of Lemma~\ref{ProofsUncapLemma}.
\end{proof}

\begin{proof}[Proof of Theorem~\ref{UncapAllPairs}]
We apply Lemma~\ref{ProofsUncapLemma} in as follows. For every setting of $a=b\in [1/3,1/2]$ we get graphs $G=(V,E,w)$ with $\card{V}=\Theta(2^{an})$ ($\card{V}=\Theta(2^{an})m$ if $a=1/3$) and $\card{E}=\Theta(2^{(1-a)n}m)$. Hence, $\card{E}=\tilde{\Theta}(\card{V}^{1/a-1})$ and so in order to get any $c\in[1,2]$ we can pick $a(=b)$ such that additionally $c=1/a-1$, and Theorem~\ref{UncapAllPairs} follows.
\end{proof}

\begin{proof}[Proof of Theorem~\ref{UncapST}]
Here we apply Lemma~\ref{ProofsUncapLemma} a bit differently. For every setting of $a,b\in [0,1/2]$ such that $1-a-b\geq\max(a,b)$ we get graphs $G=(V,E,w)$ with $\card{V}=\Theta(2^{(1-a-b)n}m)$ and $\card{E}=\Theta((2^{an}+2^{bn})2^{(1-a-b)n}m)$. Hence, in order to get any $c_1, c_2\in [0,1]$, we can pick $a=c_1/(1+c_1+c_2)$ and $b=c_2/(1+c_1+c_2)$, which clearly satisfy the required conditions. Now, observe that $c_1=a/(1-a-b)$ and $c_2=b/(1-a-b)$, thus $\card{S}=(\card{V}/m)^{c_1}$ and $\card{T}=(\card{V}/m)^{c_2}$, we get our lower bound for $\card{E}=O((\card{S}+\card{T})\card{V})$, and Theorem~\ref{UncapST} follows.
\end{proof}

\section{Reduction to Multiple-Pairs \ProblemName{Max-Flow} in Capacitated Networks}\label{ProofsCap}
In this section we prove Theorems~\ref{ST} and~\ref{AllPairs}. We proceed to prove our main technical lemma.


\begin{lemma}\label{ProofsCapLemma}
Let $a\in [0,1]$ and $b\in [0,1-a]$. Then MAX-CNF-SAT on $n$ variables and $m$ clauses $\{C_i\}_{i\in [m]}$ can be reduced to $O(m)$ instances of \STMF with $\card{S}=2^{an}$ and $\card{T}=2^{bn}$ in graphs with $N=\Theta(2^{an}+2^{(1-a-b)n}m+2^{bn})$ nodes, $O((2^{an}+2^{(1-a-b)n}+2^{bn})m)=O(N)$ edges, and with capacities in $[N]$.
\end{lemma}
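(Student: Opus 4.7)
The plan is to adapt the construction from the proof of Lemma~\ref{ProofsUncapLemma} by introducing shared \emph{clause-hub} nodes on both the source and the sink sides, and leveraging edge capacities to aggregate the $2^{(1-a-b)n}$ per-$\beta$ parallel edges of the unit-capacity construction into a single capacitated edge. Setting $k:=2^{(1-a-b)n}$ for brevity, this drops the edge count from $\Theta((2^{an}+2^{bn})km)$ to $O((2^{an}+k+2^{bn})m)$ while keeping the node structure of Lemma~\ref{ProofsUncapLemma} up to $O(m)$ extra hub nodes.

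First, I would retain all nodes of Lemma~\ref{ProofsUncapLemma} and add, for each clause $i\in[m]$: source-side hubs $a_i^0$ and $a_i^1$ (used respectively by the $\alpha$'s with $\alpha\nvDash C_i$ and $\alpha\vDash C_i$), a sink-side hub $b_i$, and a single shared red-path hub $r$. Then I would re-route the original edges through these hubs as follows. The edges $\alpha\to\beta_i^l$ and $\alpha\to\beta_i^r$ of Lemma~\ref{ProofsUncapLemma} become $\alpha\to a_i^0\to\beta_i^l$ or $\alpha\to a_i^0\to\beta_i^r$ when $\alpha\nvDash C_i$ (according to whether $\beta\nvDash C_i$ or $\beta\vDash C_i$), and $\alpha\to a_i^1\to\beta_i^r$ when $\alpha\vDash C_i$; here $\alpha\to a_i^?$ has capacity $k$ and each $a_i^?\to\beta_i^?$ has capacity $1$. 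Symmetrically, each edge $\beta_i^l\to\gamma$ becomes $\beta_i^l\to b_i\to\gamma$ with capacities $1$ and $k$ (the latter present iff $\gamma\nvDash C_i$), and each $\beta'_j\to\gamma$ becomes $\beta'_j\to r\to\gamma$ with capacities $1$ and $k(p-1)$. The internal $\beta$-edges $\beta_i^l\to\beta_i^r$, $\beta_i^r\to\beta'$, $\beta'\to\beta'_j$ keep capacity $1$. A direct count yields $\Theta(N)$ nodes, $O((2^{an}+k+2^{bn})m)$ edges, and all capacities at most $km\leq N$.

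The correctness argument mirrors Lemma~\ref{ProofsUncapLemma}: for any fixed $\beta$, at most $m$ units enter the $\beta$-specific block (one per $i$ via the unit-capacity edges $a_i^?\to\beta_i^?$); of these, at most $\#\{i:\alpha,\beta,\gamma\text{ all unsatisfy }C_i\}$ exit toward $\gamma$ along the blue route $\beta_i^l\to b_i\to\gamma$ (since $b_i\to\gamma$ requires $\gamma\nvDash C_i$), and at most $p-1$ exit along the red route $\beta'\to\beta'_j\to r\to\gamma$. Hence per-$\beta$ flow is at most $m-1$ when $\alpha,\beta,\gamma$ jointly satisfy at least $p$ clauses, and at most $m$ otherwise, so the global maximum flow from $\alpha$ to $\gamma$ is at most $km-1$ when some $\beta_*$ produces $\geq p$ satisfied clauses, and exactly $km$ when every $\beta$ produces at most $p-1$ satisfied clauses (by routing each $\beta$'s flow independently along the same blue and red paths as in Lemma~\ref{ProofsUncapLemma}).

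The main obstacle will be to argue that the per-$\beta$ flows combine additively despite sharing the hubs $a_i^?,b_i,r$. To resolve this I would note that each hub is provisioned to admit exactly one unit per $\beta$ (or $p-1$ units per $\beta$ for $r$): the capacity $k$ on $\alpha\to a_i^?$ and on $b_i\to\gamma$, and the capacity $k(p-1)$ on $r\to\gamma$, are chosen to equal the sum of the per-$\beta$ demands, so that any independent per-$\beta$ routing combines without saturating a shared edge, and the global max flow equals $\sum_\beta$ of the per-$\beta$ max flow exactly as in the unit-capacity reduction, completing the dichotomy and hence the reduction.
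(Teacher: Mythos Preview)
Your proposal is correct and follows essentially the same approach as the paper. Both constructions introduce shared ``clause hubs'' on the source side (your $a_i^0,a_i^1$ are the paper's $C_i^{\nvDash},C_i^{\vDash}$), a clause hub on the sink side (your $b_i$ is the paper's $C_i$), and a single aggregator for the red paths (your $r$ is the paper's $v_B$), using capacity $k=2^{(1-a-b)n}$ on the $\alpha$-to-hub and hub-to-$\gamma$ edges to collapse the $\Theta(k)$ parallel edges of Lemma~\ref{ProofsUncapLemma}.

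The only differences are cosmetic: the paper replaces your $p-1$ nodes $\beta'_j$ by a single edge $\beta'\to v_B$ of capacity $p-1$, and adds an extra internal node $\beta_i^c$, whereas you reuse $\beta_i^r$ from the uncapacitated gadget; and for the upper bound the paper exhibits an explicit $(S,\bar S)$-cut of capacity $km-1$, whereas you argue directly that any $\alpha\to\gamma$ flow decomposes into node-disjoint per-$\beta$ blocks with at most $m-1$ units through the $\beta_*$-block. Both arguments are valid. One small point you should state explicitly when you write this out: the reason ``at most $m$ units enter the $\beta$-block (one per $i$)'' holds is that for the fixed source $\alpha$ only one of $a_i^0,a_i^1$ is reachable, so although two unit-capacity edges $a_i^0\to\beta_i^?$ and $a_i^1\to\beta_i^r$ may exist in the graph, at most one can carry flow from $\alpha$.
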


\begin{proof}
Given a CNF-formula $F$ on $n$ variables and $m$ clauses as input for MAX-CNF-SAT, $a\in [0,1]$, and $b\in [0,1-a]$, we begin similarly to before by splitting the variables into three sets $U_1$, $U_2$, and $U_3$ where $U_1$ is of size $an$, $U_2$ is of size $(1-a-b)n$, and $U_3$ is of size $bn$, and enumerate all their $2^{an}$, $2^{(1-a-b)n}$, and $2^{bn}$ partial assignments (with respect to $F$), respectively, when the objective is to find a triple $(\alpha,\beta,\gamma)$ of assignments to $U_1$, $U_2$, and $U_3$, that satisfy the maximal number of clauses. We will have an instance $G_p$ of \STMF for each value $p\in [m]$, in which by one call to \STMF we check if there exists a triple $(\alpha,\beta,\gamma)$ that satisfies at least $p$ clauses, as follows.

We construct the graph $G_p$ on $N$ nodes $V_1\cup V_2\cup V_3\cup A\cup B\cup \{v_B\}$, where $V_1$ contains a node $\alpha$ for every assignment $\alpha$ to $U_1$, $V_2$ contains $3m+1$ nodes for every assignment $\beta$ to $U_2$, that are $\beta_i^l$, $\beta_i^c$, $\beta_i^r$, for every $i\in[m]$, and $\beta'$, $V_3$ contains a node $\gamma$ for every assignment $\gamma$ to $U_3$, $A$ contains two nodes $C_i^{\vDash}$ and $C_i^{\nvDash}$ for every clause $C_i$, and $B$ contains a node $C_i$ for every clause $C_i$. We use the notation $\alpha$ for nodes in $V_1$ and assignments to $U_1$, $\beta$ to assignments to $U_2$, $\gamma$ for nodes in $V_3$ and assignments to $U_3$, and $C_i$ for nodes in $B$ and clauses. However, it will be clear from the context. Now, we have to describe the edges in the network. In order to simplify the reduction, we partition the edges into red and blue colors, as follows.

For every $\alpha$ and $i\in [m]$ we add a red edge of capacity $2^{(1-a-b)n}$ from $\alpha$ to $C_i^{\vDash}$ if $\alpha\vDash C_i$, and a blue edge of the same capacity from $\alpha$ to $C_i^{\nvDash}$ otherwise. We further add, for every $\beta$, a red edge of capacity $1$ from $C_i^{\vDash}$ to $\beta_i^c$, a blue edge of capacity $1$ from $C_i^{\nvDash}$ to $\beta_i^l$, a blue edge of capacity $1$ from $\beta_i^l$ to $\beta_i^r$ if $\beta\nvDash C_i$, a red edge of capacity $1$ from $\beta_i^c$ to $\beta'$, and a blue edge of capacity $1$ from $\beta_i^r$ to $C_i$. For every $\beta$ we add a red edge of capacity $p-1$ from $\beta'$ to $v_B$.
For every $\gamma$ we add a red edge of capacity $2^{(1-a-b)n}(p-1)$ from $v_B$ to $\gamma\in V_3$, and finally, for every $\gamma$ and $i\in [m]$ we add a blue edge of capacity $2^{(1-a-b)n}$ from $C_i$ to $\gamma$ if $\gamma\nvDash C_i$.

The graph we built has $N=2^{an}+2m+2^{(1-a-b)n}\cdot 3m+2^{(1-a-b)n}+1+m+2^{bn}=\Theta(2^{an}+2^{(1-a-b)n}\cdot m + 2^{bn})$ nodes, at most $2^{an}m+ 2^{(1-a-b)n}\cdot 2m + 2^{(1-a-b)n}\cdot 2m+ 2^{(1-a-b)n}m + 2^{(1-a-b)n}+1+ 2^{(1-a-b)n}m + 2^{bn}m=O((2^{an}+2^{(1-a-b)n}+2^{bn})m)$ edges, all of its capacities are in $[N]$, and its construction time is $O(Nm)$ (see Figure~\ref{Figs:REDUCTION_CAP}). 

\begin{figure}[!ht]
       \includegraphics[width=0.9\textwidth,left]{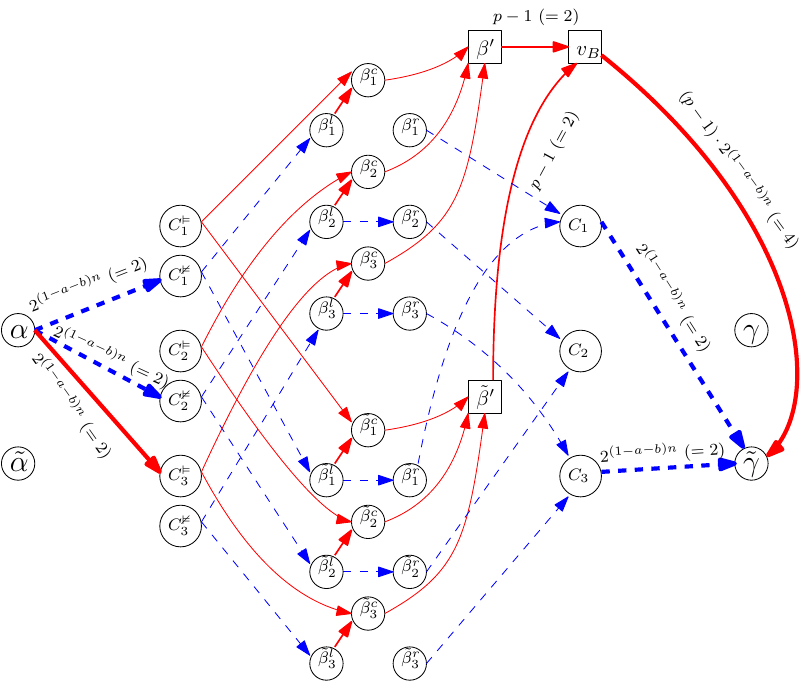}
   \caption[-]{An illustration of part of the reduction, with $p=m$. Here, $U_1$, $U_2$, and $U_3$ have $2$ assignments each; $\alpha$ and $\tilde{\alpha}$ to $U_1$, $\beta$ and $\tilde{\beta}$ to $U_2$, $\gamma$ and $\tilde{\gamma}$ to $U_3$. Bolder edges correspond to edges of higher capacity (specified wherever they are bigger than $1$), and blue edges are dashed. For simplicity, only the edges relevant to $\alpha$ and $\tilde{\gamma}$ are presented. In this illustration, $\alpha$ satisfies $C_3$, $\beta$ satisfies $C_1$, $\tilde{\beta}$ satisfies $C_3$, and $\tilde{\gamma}$ satisfies $C_2$. Note that the assignment comprised of $\alpha$, $\beta$, and $\tilde{\gamma}$ satisfies all the clauses, and indeed the maximum flow from $\alpha$ to $\gamma$ is $2\cdot 3 -1=5$.
   }
   \label{Figs:REDUCTION_CAP}
\vspace{.1in}\hrule
\end{figure}

We proceed to prove that if there is an assignment to $F$ that satisfies at least $p$ clauses then the graph $G_p$ we built has a pair $\alpha, \gamma$ with maximum flow from $\alpha$ to $\gamma$ at most $2^{(1-a-b)n}m-1$, and otherwise, every $\alpha, \gamma$ has a maximum flow of size at least $2^{(1-a-b)n}m$. Hence, by simply picking the maximal $j\in [m]$ such that the maximum flow in $G_j$ of some pair $\alpha,\gamma$ is at most $2^{(1-a-b)n}m-1$, and then by iterating over all assignments $\beta$ to $U_2$ with $\alpha$ and $\gamma$ fixed as the assignments to $U_1$ and $U_3$, we can also find the required triple $\alpha,\beta,\gamma$.

For the first direction, assume that $F$ has an assignment that satisfies at least $p$ clauses, and denote such assignment by $\Phi$. Let $\alpha_{\Phi}$, $\beta_{\Phi}$, and $\gamma_{\Phi}$ be the assignments to $U_1$, $U_2$, and $U_3$, respectively, that are induced from $\Phi$. We will show that there exists an $(\alpha_{\Phi}, \gamma_{\Phi})$ cut whose capacity is at most $2^{(1-a-b)n}m-1$, hence by the Min-Cut Max-Flow theorem, the maximum flow from $\alpha_{\Phi}$ to $\gamma_{\Phi}$ is bounded by this number, concluding the proof of the first direction. We define the cut in a way that for every $\beta\neq \beta_{\Phi}$, the cut will have $m$ cut edges that are contributed from nodes related to $\beta$, and nodes related to $\beta_{\Phi}$ will be carefully added to either side of the cut, so that they will contribute capacity of only $m-1$ to the cut. This is done by exploiting the fact that there are at most $m-p$ blue paths from $\alpha_{\Phi}$ to $\gamma_{\Phi}$ through nodes associated with $\beta_{\Phi}$. To be more precise, we define a suitable cut as follows. \footnote{In previous versions including the journal version (TALG 2018), $V_3\setminus \{\gamma_{\Phi}\}$ was erroneously not included in the set $S$.} 

$$
S=\{\alpha_{\Phi}, \beta_{\Phi}'\}\cup
\{C_i^{\vDash} : \alpha_{\Phi}\vDash C_i \}\cup 
\{C_i^{\nvDash} : \alpha_{\Phi}\nvDash C_i \}\cup
\{({\beta_{\Phi})}_i^c : i\in [m] \}\cup
\{C_i, {(\beta_{\Phi})}_i^l, {(\beta_{\Phi})}_i^r : \gamma_{\Phi}\vDash C_i \}\cup
$$
$$
\{{(\beta_{\Phi})}_i^l : \gamma_{\Phi}\nvDash C_i \wedge \beta_{\Phi}\vDash C_i\}\cup
V_3\setminus \{\gamma_{\Phi}\}
$$

\begin{claim}\label{CutLemma}
The cut $(S,V\setminus S)=(S,T)$ has capacity at most $2^{(1-a-b)n}m-1$.
\end{claim}

\begin{proof}[Proof of Claim]
We will go over all the nodes in $S$, and count the total capacity leaving to nodes in $T$ for each of them.
$\alpha_{\Phi}\in S$ and all nodes $C_i^{\vDash}$ and $C_i^{\nvDash}$ that are adjacent to it are in $S$ too, hence it does not contribute anything. For every $i\in [m]$, we have two cases for nodes in $A$. If $\alpha_{\Phi}\vDash C_i$ then $C_i^{\nvDash}\in T$ and hence $C_i^{\nvDash}$ does not contribute anything.
However, $C_i^{\vDash}$ has $2^{(1-a-b)n}$ outgoing edges, where all except ${(\beta_{\Phi})}_i^c$ are in $T$. Hence, it contributes $2^{(1-a-b)n}-1$ to the cut.
Else, if $\alpha_{\Phi}\nvDash C_i$ then $C_i^{\vDash}\in T$ and hence $C_i^{\vDash}$ does not contribute anything.
But $C_i^{\nvDash}$ has $2^{(1-a-b)n}$ outgoing edges, of which $2^{(1-a-b)n}-1$ are cut edges as their targets are in $T$, and the one incoming to ${(\beta_{\Phi})}_i^l$ is a cut edge if and only if ${(\beta_{\Phi})}_i\nvDash C_i$ and also $\gamma_{\Phi}\nvDash C_i$ (equivalently, ${(\beta_{\Phi})}_i^l\in T$), and in our current case it means that $\Phi\nvDash C_i$.
Hence, for every $i\in [m]$, the nodes in $\{C_i^{\vDash},C_i^{\nvDash}\}$ contribute $2^{(1-a-b)n}-1$ to the cut if $\Phi\vDash C_i$, and $2^{(1-a-b)n}$ otherwise. Since there are at most $m-p$ clauses that are not satisfied by $\Phi$, summing over all $i\in [m]$ would yield a total of at most $p(2^{(1-a-b)n}-1)+(m-p)(2^{(1-a-b)n})=2^{(1-a-b)n}m-p$ cut edges for vertices with origin in $A$.

For every $\beta\neq\beta_{\Phi}$, all nodes in $V_2$ that are associated with $\beta$, $v_B$, and $\gamma_{\Phi}$, are in $T$ and hence will not contribute anything to the cut. However, the node ${\beta_{\Phi}}'$ is always in $S$, with $v_B$ its sole target, and hence the edge $({\beta_{\Phi}}',v_B)$ is in the cut and ${\beta_{\Phi}}'$ contributes an additional amount of $p-1$, to a current total of at most $2^{(1-a-b)n}m-p+(p-1)=2^{(1-a-b)n}m-1$. In addition, ${\beta_{\Phi}}'$ is the only target of ${(\beta_{\Phi})}_i^c$, and thus ${(\beta_{\Phi})}_i^c$ will not contribute to the cut.

We will show that the rest of the nodes, i.e., nodes in $V_2$ that are of the forms $\beta_{\Phi}^l$ and $\beta_{\Phi}^l$, and the nodes in $B$, contribute nothing to the cut. For every $i\in [m]$, ${(\beta_{\Phi})}_i^l\in S$ if and only if either $\beta_{\Phi}\vDash C_i$ or $\gamma_{\Phi}\vDash C_i$, so we assume that. It always happens that ${(\beta_{\Phi})}_i^c\in S$, and ${(\beta_{\Phi})}_i^r\in T$ if and only if $\gamma_{\Phi}\nvDash C_i$, but in such case, by our assumption it must be that $\beta_{\Phi}\vDash C_i$, which implies that the edge $({(\beta_{\Phi})}_i^l,{(\beta_{\Phi})}_i^r)$ is not in the graph, thus the total contribution of ${(\beta_{\Phi})}_i^l$ is zero. Continuing to nodes of the forms ${(\beta_{\Phi})}_i^r$ and $C_i$, it is easy to verify that the following four statements are either all true or all false: ${(\beta_{\Phi})}_i^r\in S$, $\gamma_{\Phi}\vDash C_i$, $C_i\in S$, and the edge $(C_i,\gamma_{\Phi})$ is not in the graph. In the case where they all false, in particular $C_i$ and ${(\beta_{\Phi})}_i^r$ are in $T$ and it is clear that they do not contribute anything, so we will focus on the remaining case. Since $C_i$ is in $S$ and is the only target of ${(\beta_{\Phi})}_i^r$, ${(\beta_{\Phi})}_i^r$ will not increase the cut capacity. In addition, since the edge $(C_i,\gamma_{\Phi})$ is not in the graph and also all nodes $V_3\setminus \{\gamma_{\Phi}\}$ are in $S$, $C_i$ does not increase the capacity of the cut either. Altogether we have bounded the total capacity of the cut by $2^{(1-a-b)n}m-1$, finishing the proof of Claim~\ref{CutLemma}.
\end{proof}

Proceeding with the proof of Lemma~\ref{ProofsCapLemma}, we now focus on the second direction. Assume that every assignment to $F$ satisfies at most $p-1$ clauses. We remind that we need to prove that the maximum flow from every $\alpha$ to every $\gamma$ is at least $2^{(1-a-b)n}m$, and to do this we first fix $\alpha$ and $\gamma$. By the assumption, for every $\beta$ there exist $m-(p-1) = m-p+1$ $i$'s, such that $\alpha$, $\beta$, and $\gamma$ do not satisfy $C_i$, and we denote a set with this amount of such $i$'s by $I_{\beta}$. Each of these $i$'s induces a blue path $(\alpha\rightarrow C_i^{\nvDash}\rightarrow \beta_i^l\rightarrow \beta_i^r\rightarrow C_i\rightarrow \gamma)$ from $\alpha$ to $\gamma$, and so we pass a unit of flow through every one of them according to $I_{\beta}$, and for all $\beta$, in what we call the first phase. We note that so far, the flow sums up to $2^{(1-a-b)n}(m-p+1)$, and so we carry on with shipping the second phase of flow through paths that are not entirely blue.

We claim that for every $\beta$, we can pass an additional amount of $m-(m-p+1) = p-1$ units through $\beta'$, which would add up to a total flow of $2^{(1-a-b)n}(m-p+1)+2^{(1-a-b)n}(p-1)=2^{(1-a-b)n}m$, concluding the proof. Indeed, for every $\beta$, we ship flow in the following way. For every $i\in [m]\setminus I_{\beta}$, if $\alpha\nvDash C_i$ then send a unit through $(\alpha\rightarrow C_i^{\nvDash}\rightarrow \beta_{i}^l\rightarrow \beta_i^c\rightarrow \beta'\rightarrow v_B\rightarrow \gamma)$, and otherwise send a unit through $(\alpha\rightarrow C_i^{\vDash}\rightarrow \beta_i^c \rightarrow \beta'\rightarrow v_B\rightarrow \gamma)$. 

Since we defined the flow in paths, we only need to show that the capacity constraints are satisfied, starting with edges of color blue. Edges of the forms $(\beta_i^l,\beta_i^r)$, $(\beta_i^r,C_i)$, and $(C_i,\gamma)$ are only used in the first phase, where the flow in the first two is uniquely determined by $\beta$ and $i\in I_{\beta}$, and so at most $1$ unit of flow is passed through them, and the flow in the latter kind is determined by $i\in I_{\beta}$, and the same $i\in I_{\beta}$ can have at most $\card{\{\beta_i^r\}_{\beta}}=2^{(1-a-b)n}$ units of flow passing in $(C_i, \gamma)$, and so the flow in it is also bounded. The flow in edges of the form $(C_i^{\nvDash}, \beta_i^l)$ in the first phase is uniquely determined by $\beta$ and $i\in I_{\beta}$, and in the second phase uniquely according to $\beta$ and $i\in [m]\setminus I_{\beta}$, and so will not be used twice, and the flow in edges of the form $(\alpha, C_i^{\nvDash})$ is determined in the first phase by $i\in I_{\beta}$ and in the second phase by $i\in [m]\setminus I_{\beta}$, and so will be used at most $\sum_{\beta} \card{I_{\beta}\cap \{i\}}+\sum_{\beta}\card{([m]\setminus I_{\beta})\cap \{i\} }\leq 2^{(1-a-b)n}$ times.

We now proceed to prove that red edges too do not have more flow than their capacity, and for this we only need to consider the second phase. Edges of the forms $(C_i^{\vDash}, \beta_i^c)$, $(\beta_i^l,\beta_i^c)$, and $(\beta_i^c,\beta')$ have flow that is uniquely determined by $\beta$ and $i\in [m]\setminus I_{\beta}$ and so are not used more than once, edges of the form $(\beta',v_B)$ have flow that is determined by $\beta$ and thus have flow $\card{\{\beta_i^c\}_{i\in [m]\setminus I_{\beta}}}=\card{[m]\setminus I_{\beta}} = p-1$, and edges of the form $(v_B,\gamma)$ have flow of size $(p-1)\card{\{\beta'\}_{\beta}}2^{(1-a-b)n}=(p-1)2^{(1-a-b)n}$, and hence are properly bounded. Finally, edges of the form $(\alpha,C_i^{\vDash})$ have flow that is determined by $i\in [m]\setminus I_{\beta}$ and so are used at most $\card{\{\beta_i^c\}_{\beta}}=2^{(1-a-b)n}$ times. Altogether, we have bounded the total flow in all the edges that were used in both phases, and so the capacity requirements follow, which completes the proof of the second direction and of Lemma~\ref{ProofsCapLemma}.
\end{proof}

\begin{proof}[Proof of Theorem~\ref{ST}]
We apply Lemma~\ref{ProofsCapLemma} in the following way. For every setting of $a,b\in [0,1/2]$ such that $1-a-b\geq\max(a,b)$ we get graphs $G=(V,E,w)$ with $\card{V}=\Theta(2^{(1-a-b)n}m)$ and $\card{E}=O(2^{(1-a-b)n}m)=O(\card{V})$. Hence, in order to get any $c_1, c_2\in [0,1]$, we can pick $a=c_1/(1+c_1+c_2)$ and $b=c_2/(1+c_1+c_2)$, which clearly satisfy the required conditions. Now, observe that $c_1=a/(1-a-b)$ and $c_2=b/(1-a-b)$, thus $\card{S}=(\card{V}/m)^{c_1}$ and $\card{T}=(\card{V}/m)^{c_2}$, and our claimed lower bound and Theorem~\ref{ST} follow.
%
%

\end{proof}
\section{Generalization to Bounded Cuts}\label{GTSC}
Our lower bounds extend to the version where we only care about vertex-pairs with maximum flow bounded by a given $k$, 
which we refer to as \ProblemName{kPMF}.

\begin{definition}(\kPMF)
Given a directed edge-capacitated graph $G=(V,E,w)$ and an integer $k$, for every pair of nodes $u,v\in V$ where the maximum flow that can be shipped in $G$ from $u$ to $v$ is of size at most $k$, output this pair and its maximum flow value.
\end{definition}

\begin{theorem}[Generalization of Theorem~\ref{UncapAllPairs}]
\label{Uncapacitated_kPMF}
If for some fixed constants $\varepsilon>0$ and $c\in [0, 1]$, \kPMF in uncapacitated graphs with $n$ nodes, $k=\tO(n^c)$, and $m=O(kn)$ edges can be solved in time $O((n^2 k)^{1-\varepsilon})$, then for some $\delta(\varepsilon)>0$, MAX-CNF-SAT on $n'$ variables and $O(n')$ clauses can be solved in time $O(2^{(1-\delta)n'})$, and in particular SETH is false.
\end{theorem}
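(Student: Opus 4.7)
The plan is to re-use the reduction of Lemma~\ref{ProofsUncapLemma}, choosing the partition sizes so that the relevant max-flow values sit right at the kPMF threshold. Given $c\in[0,1]$, I would set $a=b=\tfrac{1}{c+2}\in[1/3,1/2]$. Applied to a MAX-CNF-SAT instance $F$ on $n'$ variables with $m'=O(n')$ clauses, the lemma produces, for each $p\in[m']$, a unit-capacity graph $G_p$ with $n=\tilde\Theta(2^{an'})$ nodes and $\tilde\Theta(n\cdot 2^{(1-2a)n'})$ edges.

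The second step is to observe that the gap guaranteed by the reduction fits the \kPMF framework exactly. Lemma~\ref{ProofsUncapLemma} shows that if some assignment satisfies at least $p$ clauses, then the corresponding pair $(\alpha_\Phi,\gamma_\Phi)\in V_1\times V_3$ has max flow at most $2^{(1-2a)n'}m'-1$ in $G_p$, while otherwise every $V_1\times V_3$ pair has max flow at least $2^{(1-2a)n'}m'$. I would therefore set the kPMF threshold to $k:=2^{(1-2a)n'}m'-1=\tilde\Theta(n^{(1-2a)/a})=\tilde\Theta(n^c)$, so that the yes-instance pair appears in the kPMF output while no $V_1\times V_3$ pair appears in a no-instance. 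The edge count becomes $\tilde\Theta(nk)$, matching the hypothesis $m=O(kn)$ up to polylog factors that can be absorbed into the $\tilde\Theta(n^c)$ bound on $k$.

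To solve MAX-CNF-SAT I would iterate $p\in[m']$ (or binary search), for each $p$ run the assumed $O((n^2k)^{1-\varepsilon})$-time \kPMF algorithm on $G_p$, and scan its output for a pair in $V_1\times V_3$; the largest $p$ producing such a pair equals the optimum, after which a witnessing assignment can be extracted by enumerating the $2^{(1-2a)n'}$ middle assignments $\beta$ exactly as at the end of Lemma~\ref{ProofsUncapLemma}. Since $a(2+c)=1$, we have $n^2k=\tilde\Theta(2^{n'})$, and so the total running time becomes $\tilde O\bigl(m'\cdot(n^2k)^{1-\varepsilon}\bigr)=\tilde O\bigl(2^{(1-\varepsilon)n'}\bigr)$, yielding $O(2^{(1-\delta)n'})$ for some constant $\delta(\varepsilon)>0$.

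The main obstacle will be parameter bookkeeping: the single quantity $k$ must simultaneously play three roles, namely the kPMF threshold above which flows are discarded, the parameter $\tilde\Theta(n^c)$ fixed by the hypothesis, and the slack that makes $m=O(kn)$ hold. The choice $a=b=1/(c+2)$ is precisely what aligns all three, and because the yes/no gap in Lemma~\ref{ProofsUncapLemma} is already sharp (differing by exactly one unit of flow), truncation at $k$ loses no information and no new combinatorial argument beyond that of Lemma~\ref{ProofsUncapLemma} is required.
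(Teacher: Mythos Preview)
Your proposal is correct and follows essentially the same route as the paper: apply Lemma~\ref{ProofsUncapLemma} with $a=b=1/(c+2)\in[1/3,1/2]$ (equivalently, the paper's $c=1/a-2$), and observe that the resulting threshold $2^{(1-2a)n'}m'$ is $\tilde\Theta(n^c)$ while the edge count is $\tilde\Theta(nk)$ and $n^2k=\tilde\Theta(2^{n'})$.

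The only cosmetic difference is that the paper additionally writes down an explicit cut $S=\{\alpha'\}\cup\{\beta_i^{l}:i\in[m'],\ \forall\beta\}$ to re-derive the upper bound $k=O(2^{(1-2a)n'}m')$ on the flow of every $(\alpha,\gamma)$ pair, whereas you simply invoke the per-$\beta$ out-degree bound already established inside Lemma~\ref{ProofsUncapLemma} and set the \kPMF threshold to the exact yes/no gap $2^{(1-2a)n'}m'-1$, testing for the mere presence of a $V_1\times V_3$ pair in the output rather than inspecting flow values. Both choices are equivalent for the purpose of the reduction.
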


\begin{proof}
We apply Lemma~\ref{ProofsUncapLemma} as follows. For every setting of $a=b\in[1/3,1/2]$ we get graphs $G=(V,E,w)$ with $\card{V}=\Theta(2^{an})$ ($\card{V}=\Theta(2^{an}m)$ if $a=1/3$), and $\card{E}=2^{an}\cdot 2^{(1-2a)n}m=\Theta(2^{(1-a)n}m)$. The main idea is that the middle layer bound the flow from every $\alpha$ to every $\gamma$, which are the only pairs that we need to find the maximum flow for. To be more precise, for every $\alpha'$ and $\gamma'$ we show a cut of capacity $k=O(2^{(1-2a)n}m)$ separating them, by considering
$$
S=\{\alpha'\}\cup
\{{\beta}_i^l : i\in [m], \forall \beta \}.
$$
Clearly, the only outgoing edges from $S$ are from $\alpha'$ and from vertices of the form ${\beta}_i^l$. $\alpha'$ has an outgoing degree at most $O(2^{(1-2a)n}m)$, and for each $\beta$ and $i\in [m]$, vertices of the form ${\beta}_i^l$ have a total outgoing degree at most $2$. Hence, the total capacity of the cut is bounded by $k=O(2^{(1-2a)n}m)$.
The claimed range of $k$ is attained 
because setting $a=1/2$ yields $k=O(m)=O(\log\card{V})\leq O(n^{c})$, 
and letting $a$ approach $1/3$ yields $k$ tending to $O(2^{n/3}m)=O(\card{V})$.
Note that $\card{E}=O(\card{V}k)$, and $\card{V}^2k=O((2^{an})^2\cdot 2^{(1-2a)n}m)=O(2^nm)$, and finally in order to get any $c\in [0,1]$ we can pick $a(=b)$ such that additionally $c=1/a-2$, and Theorem~\ref{Uncapacitated_kPMF} holds.
\end{proof}

\begin{theorem}[Generalization of Theorem~\ref{AllPairs}] 
\label{Capacitated_kPMF}
If for some fixed constants $\varepsilon>0$ and $c\in [0,1]$, \kPMF in graphs with $n$ nodes, $k=\tO(n^c)$, $m=O(n)$ edges, and capacities in $[n]$ can be solved in time $O((n^2 k)^{1-\varepsilon})$, then for some $\delta(\varepsilon)>0$, MAX-CNF-SAT on $n'$ variables and $O(n')$ clauses can be solved in time $O(2^{(1-\delta)n'})$, and in particular SETH is false.
\end{theorem}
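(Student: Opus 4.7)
The plan is to apply Lemma~\ref{ProofsCapLemma} with $a=b$ exactly as in the proof of Theorem~\ref{AllPairs}, and then verify that the maximum flow between every pair of interest is small enough that it is guaranteed to land inside the output of a \kPMF oracle. For a MAX-CNF-SAT instance with $n'$ variables and $m'=O(n')$ clauses, the reduction produces graphs $G_p$ (one per threshold $p\in[m']$) with $N=\Theta(2^{an'}+2^{(1-2a)n'}m')$ nodes, $O(2^{an'}m')$ edges, and integer capacities in $[N]$; the MAX-CNF-SAT answer is recovered from the max-flow values between pairs $(\alpha,\gamma)\in V_1\times V_3$ across these instances, by picking the largest $p$ for which such a pair has value at most $2^{(1-2a)n'}m'-1$.

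The key step is to bound these max-flow values by a small $k$. The cut $S=\{\alpha\}$ separates $\alpha\in V_1$ from every $\gamma\in V_3$, and its capacity is exactly the total outgoing capacity from $\alpha$, namely $m'\cdot 2^{(1-2a)n'}$, since each of the $m'$ edges leaving $\alpha$ (to a node $C_i^{\vDash}$ or $C_i^{\nvDash}$) carries capacity $2^{(1-2a)n'}$. Setting $k=m'\cdot 2^{(1-2a)n'}$ therefore guarantees that a \kPMF oracle outputs every pair $(\alpha,\gamma)\in V_1\times V_3$ together with its exact max-flow value, and Lemma~\ref{ProofsCapLemma} lets us read off the maximum number of satisfiable clauses.

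The parameter matching mirrors the calculation in Theorem~\ref{ST}. Writing $n=N=\Theta(2^{an'})$ (up to sub-polynomial factors, using $m'=O(n')$), we get $k=\tilde{\Theta}(n^{1/a-2})$, so letting $a=1/(c+2)$ as $c$ ranges over $[0,1]$ realizes the full claimed range $k=\tilde{\Theta}(n^c)$, while preserving both $m=\tilde{O}(n)$ and the bound on capacities in $[n]$ (since $2^{(1-2a)n'}m'\leq 2^{an'}$ for $a\geq 1/3$). The crucial identity is
\[
n^2 k \;=\; 2^{2an'}\cdot 2^{(1-2a)n'}m' \;=\; 2^{n'}m',
\]
so a hypothetical $O((n^2k)^{1-\varepsilon})$-time algorithm, invoked $O(m')$ times over the choices of $p$, would solve MAX-CNF-SAT in total time $\tilde{O}(m'\cdot 2^{(1-\varepsilon)n'})$, refuting SETH for any $\delta<\varepsilon$.

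The main obstacle is purely bookkeeping; all combinatorial content (the reduction, its correctness, and the trivial single-vertex cut witnessing $k$) is inherited from Lemma~\ref{ProofsCapLemma}. The only subtlety to check is the boundary case $a=1/3$, where the two summands in $N$ are of the same order and hence $N=\Theta(2^{n'/3}m')$, $k=\Theta(N)$; one needs to verify that the capacity bound and the edge-count bound still hold, both of which follow immediately from $m'=O(\log N)$.
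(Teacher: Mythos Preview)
Your proof is correct and follows the same overall strategy as the paper: apply Lemma~\ref{ProofsCapLemma} with $a=b\in[1/3,1/2]$, exhibit a cut of capacity $O(2^{(1-2a)n'}m')$ separating every relevant pair $(\alpha,\gamma)$, and then match parameters via $c=1/a-2$ and $n^2k=2^{n'}m'$. The one point of difference is the cut you use. You take the trivial singleton $S=\{\alpha\}$, whose outgoing capacity is exactly $m'\cdot 2^{(1-2a)n'}$; the paper instead takes the larger set $S=\{\alpha'\}\cup\{C_i^{\vDash},C_i^{\nvDash}:i\in[m]\}\cup\{\beta_i^l,\beta_i^c:i\in[m],\forall\beta\}$ and argues that all edges leaving it have capacity $1$, with at most two such edges per $(\beta,i)$. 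Both cuts yield the same bound $O(2^{(1-2a)n'}m')$, and your choice is cleaner here because in the capacitated construction the out-degree of $\alpha$ is only $m'$ while the per-edge capacity already encodes the $2^{(1-2a)n'}$ factor. The paper's more elaborate cut mirrors the one used in the uncapacitated Theorem~\ref{Uncapacitated_kPMF}, keeping the two proofs structurally parallel, but it is not needed for correctness. Your bookkeeping (including the $a=1/3$ boundary where $N=\Theta(2^{n'/3}m')$ and the capacity bound via $2^{(1-2a)n'}m'\le 2^{an'}$ for $a\ge 1/3$) is accurate.
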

\begin{proof}

We apply Lemma~\ref{ProofsCapLemma} in a similar fashion to the application of Lemma~\ref{ProofsUncapLemma} in the proof of Theorem~\ref{Uncapacitated_kPMF}, where the choices of $a$ and $b$ are done in exactly the same way as before, also allowing again a free choice of $c\in[0,1]$. However, now $\card{E}=O(\card{V})$, and we choose the cut as follows. For every $\alpha'$ and $\gamma'$ we show a cut of capacity $k=O(2^{(1-2a)n}m)$ separating them, by considering

$$
S=\{\alpha'\}\cup
\{C_i^{\vDash},C_i^{\nvDash} : i\in [m] \}\cup 
\{{\beta}_i^l, {\beta}_i^c : i\in [m], \forall \beta \}.
$$

Clearly, the only outgoing edges from $S$ are of capacity $1$, from $\alpha'$ and from vertices of the forms ${\beta}_i^l$ and ${\beta}_i^c$. For each $\beta$ and $i\in [m]$, these vertices have a total of at most $2$ edges going out to the rest of the graph. Hence, the size of the cut is bounded by $k=O(2^{(1-2a)n}m)$, and the range of $k$ is similar to the proof of Theorem~\ref{Uncapacitated_kPMF}, and so Theorem~\ref{Capacitated_kPMF} holds.


%
\end{proof}

Known algorithms solve \kPMF in \emph{directed} graphs
in time $\tO(n^2m\cdot \min(k, \sqrt{n}))$, which is bigger than the lower bound in Theorem~\ref{Capacitated_kPMF} by a factor that is roughly between $\sqrt{n}$ and $n$ for sparse graphs, 
leaving a gap that is not too big even for relatively small values of $k$. 
This running time can be achieved by $O(n^2)$ computations of either the aforementioned $O(mk)$ time algorithm of~\cite{FF56} (actually, a slightly modified version that halts when the total flow exceeds $k$), or the $\tO(m\sqrt{n})$ time algorithm of~\cite{lee2014path}.

It is interesting to note that in graphs that are \textit{undirected} and uncapacitated, an algorithm for \kPMF with running time $O(mk+n^2)$ was shown in~\cite{BHKP07}. This shows a separation between the directed and the undirected cases also for uncapacitated graphs, 
roughly by a factor $\Omega(n^2k/(mk+n^2))=\Omega(\min(k, n/k))$, since our relevant conditional lower bound is proved for $m=O(kn)$.
Their algorithm actually builds in time $O(mk)$ a partial Gomory-Hu tree that succinctly represents the values required by \kPMF, 
and then it is easy to extract all the relevant values in time $O(n^2)$, 
as required by our definition of \kPMF. 
For instance, when $k=O(\sqrt{n})$ and $m=O(n^{3/2})$ their upper bound for the undirected and uncapacitated case is $O(n^2)$, while our lower bound for the directed case is $n^{2.5-o(1)}$.

\section{\ProblemName{Global Max-Flow}}\label{GMF}
\begin{proof}[Proof of Theorem~\ref{MLEC}]
Given a CNF-formula $F$ on $n$ variables and $m$ clauses $\{C_i\}_{i\in [m]}$ as input for MAX-CNF-SAT, we split the variables into two sets $U_1$ and $U_2$ of size $n/2$ each and enumerate all $2^{n/2}$ partial assignments (with respect to $F$) to each of them, when the objective is to find a pair $(\alpha, \beta)$ of assignments to $U_1$ and $U_2$ that satisfy the maximal number of clauses. We construct a graph $G=(V,E)$ such that $V=L\cup R\cup C$ as follows. $L$ contains a node $\alpha$ for every assignment $\alpha$ to $U_1$, $R$ contains a node $\beta$ for every assignment $\beta$ to $U_2$, and $C$ contains three nodes $c_{\vDash,\vDash}$, $c_{\vDash,\nvDash}$, and $c_{\nvDash,\vDash}$ for every clause $C_i$. We use the notation $\alpha$ for nodes in $L$ and assignments to $U_1$, $\beta$ for nodes in $R$ and assignments to $U_2$. However, it will be clear from the context. For every assignment $\alpha$ to $U_1$ and clause $C_i$, we add an edge from $\alpha$ to $c_{\vDash,\vDash}$ and $c_{\vDash,\nvDash}$ if $\alpha \vDash C_i$, and an edge from $\alpha$ to $c_{\nvDash,\vDash}$ otherwise. Similarly, for every assignment $\beta$ to $U_2$ and clause $C_i$, we add an edge from $\beta$ to $c_{\vDash,\vDash}$ and $c_{\nvDash,\vDash}$ if $\beta\vDash C_i$, and an edge from $\beta$ to $c_{\vDash,\nvDash}$ otherwise. 
This graph has $N = 2^{n/2} + 2^{n/2} + 3m = O(2^{n/2})$ nodes and at most $N\cdot 2m + N\cdot 2m = \tO(N)$ edges.
%
For every pair of assignments $\alpha$ and $\beta$ and clause $C_i$ there is exactly one path (of length $2$) from $\alpha$ to $\beta$ through nodes associated with $C_i$ if and only if $\alpha\vDash C_i$ or $\beta\vDash C_i$, and no paths through them otherwise. Hence, the number of edge disjoint paths from $\alpha$ to $\beta$ is exactly the number of clauses that are satisfied by both of the assignments $\alpha$ and $\beta$, and so an algorithm for \MLEC with running time $\tO(n^{2-\varepsilon})$ implies an algorithm for MAX-CNF-SAT with running time $\tO((2^{n/2})^{2-\varepsilon})=\tO(2^{(1-\varepsilon/2)n})$, completing the proof for $\delta(\varepsilon)=\varepsilon/2$.
\end{proof}

\paragraph*{Acknowledgements}
We thank Rajesh Chitnis and Bundit Laekhanukit for some useful conversations, and for their part in achieving the result on \GMF.

\bibliographystyle{alphaurlinit}

\bibliography{robi,drafts}

\end{document}